\documentclass[journal,twoside,web]{ieeecolor}
\ifdefined\pdfminorversion \pdfminorversion=4 \fi
\usepackage{generic}
\usepackage{cite}
\usepackage{amsmath,amssymb,amsfonts}
\usepackage{algorithmic}
\usepackage{graphicx}
\usepackage{algorithm,algorithmic}
\usepackage{hyperref}
\usepackage{textcomp}

\usepackage{amsthm}

\usepackage{float}
\usepackage{comment}

\newcommand{\R}{\mathbb{R}}

\newcommand{\K}{\mathcal{K}}
\newcommand{\A}{\mathcal{A}}
\newcommand{\J}{\mathcal{J}}
\newcommand{\lb}{\langle}
\newcommand{\rb}{\rangle}
\newcommand{\hstm}{\hspace{2em}}

\theoremstyle{plain}
\newtheorem{thm}{Theorem}
\newtheorem{lem}[thm]{Lemma}
\newtheorem{cor}[thm]{Corollary}
\newtheorem{prop}[thm]{Proposition}
\theoremstyle{definition}
\newtheorem{defn}[thm]{Definition}
\newtheorem{exmp}[thm]{Example}
\newtheorem{prob}[thm]{Problem}

\newtheorem{heur}[thm]{Heuristic}
\newtheorem{assumption}[thm]{Assumption}
\theoremstyle{remark}

\newtheorem*{sktch}{Sketch of Proof}
\newenvironment{sketch}{\begin{sktch}}{\hfill $\qed$ \end{sktch}}

\def\BibTeX{{\rm B\kern-.05em{\sc i\kern-.025em b}\kern-.08em
    T\kern-.1667em\lower.7ex\hbox{E}\kern-.125emX}}
\begin{document}

\title{On Distributed Control of Continuum Swarms: Local Controllers as Differential Operators}

\author{Max Emerick, \IEEEmembership{Graduate Student Member, IEEE}, Saroj Prasad Chhatoi, \IEEEmembership{Graduate Student Member, IEEE}, and Bassam Bamieh, \IEEEmembership{Fellow, IEEE}
\thanks{Submitted for review on August 27, 2026. This work is partially supported by NSF ECCS-2453491 and AFOSR FA9550-23-F-0014.}
\thanks{Max Emerick and Bassam Bamieh are with the Department of Mechanical Engineering, University of California, Santa Barbara, USA (e-mails: \{memerick,bamieh\}@ucsb.edu). Saroj Prasad Chhatoi is with the Institute of Mathematics,   Technische Universität Berlin, Berlin, Germany, (e-mail: chhatoi@math.tu-berlin.de).}
\thanks{The first two authors contributed equally.}
}

\maketitle

\begin{abstract}
We study the problem of distributed control of large-scale robotic swarms which can be modeled as continuum densities evolving under the continuity equation. We propose a formalization of distributed controllers as (generally nonlinear) spatial differential operators, in which control inputs depend only on local information about the state and environment. This perspective yields a fully local, PDE-based framework for analysis and design. We apply this framework to the problem of stabilizing a swarm density around an arbitrary target density, and investigate fundamental limitations of low-spatial-order distributed controllers in achieving this goal. In particular, we show that controllers which act in a purely pointwise manner are incompatible with natural system symmetries and strong forms of stability, and must rely on mixing-type behavior to achieve stabilization. In contrast, we present a simple first-order control law which achieves stabilization and enjoys substantially stronger properties.
\end{abstract}

\begin{IEEEkeywords}
Decentralized control, 
distributed control, 
distributed-parameter systems, 
large-scale systems, 
multi-agent systems, 
partial differential equations
\end{IEEEkeywords}

\section{Introduction}

Robotic swarms are currently being developed for use in a wide variety of applications, including transportation, logistics, data collection, defense, emergency response, and entertainment. These robotic swarms are expected to be very large in some cases, numbering from hundreds to potentially hundreds of thousands of agents.

The analysis and control of such large systems presents significant challenges. For one, the sheer size of these systems makes them difficult to analyze using traditional methods. One approach here is to model these swarms as \emph{continua}, essentially discarding ``microscopic'' information about the states of individual agents to emphasize the ``macroscopic'' configuration of the overall swarm. This provides a tractable and scale-invariant model on which to base analysis and control design. Controllers designed in this continuum setting may then be translated back to the finite-agent setting as discrete approximations.

A second challenge comes from the problem of  control architecture. Traditional \emph{centralized} controllers which observe the state of the whole system and provide commands to each agent require significant communication and computation infrastructure, making these architectures impractical to scale. Thus, very large swarms will need to rely on \emph{distributed} control architectures, where agents observe their surroundings, communicate with their neighbors, and execute their own decisions based on the limited information they have available.

Which distributed control architectures should be implemented is perhaps one of the most important questions of distributed control system co-design. One of the motivations of the present work is that limits-of-performance of various distributed architectures are more transparently studied in the continuum framework. Once those are understood, it is then an easier task to transfer this insight to the discrete, multi-agent setting. While distributed controls have been studied in the discrete-agent setting for some time now, there are relatively few studies on distributed control in the continuum setting. Furthermore, while the formulation of distributed control in the discrete setting is by now well-established, it is not immediately obvious how to formulate the notion in the continuum setting.

Earlier references most closely related to the present work are 
\cite{bamieh2002distributed,d2003distributed,langbort2005distributed,bamieh2005convex,motee2008optimal}, which are primarily concerned with ``architectural questions'' in the design of distributed controllers. In patricular, the interplay between  optimality and the spatial interconnection between measurements and actuation.
The recent work \cite{arbelaiz2024optimal} also explores this idea, but from the estimation side of the problem. In works such as \cite{ferrari2016distributed,krishnan2018distributed,krishnan2025distributed}, authors obtain distributed implementations of centralized control schemes by using distributed algorithms to approximate global quantities. The work \cite{zheng2021distributed} is philosophically similar but focuses on estimation. The recent work \cite{conger2025convex} takes a system-level synthesis approach to control of continuum systems, treating distributed control constraints as constraints on the supports of the kernels of the closed-loop maps.

In this paper, we provide a different notion of distributed controllers as spatial differential operators, and we apply this perspective to a control problem for continuum swarms. While it seems this notion may have been implicit in some of the literature, to our knowledge, it has not been stated formally as such. The main contributions of this work are thus 1) the formal statement of a notion of distributed controllers as differential operators, and 2) new results and perspectives on the fundamental performance limitations of low-spatial-order distributed controllers arising from this notion. We argue that the spatial order of the differential operator in the continuum setting is a proxy for the size of ``local feedback neighborhood'' in the discrete multi-agent setting. 

The rest of the paper proceeds as follows. In Section \ref{problem_formulation_sec}, we state our swarm control problem and develop the notion of distributed controllers as differential operators. In Section \ref{gradient_controller_sec}, we present a simple example of an order-1 distributed controller which solves the proposed problem. In Section \ref{main_results_sec}, we investigate fundamental limitations of order-0 (i.e.~``pointwise'' -- a proxy for fully decentralized controllers) distributed control laws. In particular, we show that order-0 distributed controllers which solve the proposed problem must be based on mixing, and cannot possess certain desirable symmetry or stability properties. We conclude in Section \ref{conclusion_sec} with a brief review and pointers to future work.

\section{Problem Formulation and Description of Model} \label{problem_formulation_sec}

In this section, we state our distributed swarm control problem and develop the notion of distributed controllers as differential operators.

\subsection{Problem Formulation}

We study swarm control problems where the goal is to stabilize a large-scale swarm around some target configuration. The swarm and target are approximated by continuous density functions $\rho$ and $\mu$, respectively, which are assumed to be supported within a compact convex spatial domain $\Omega \subset \R^n$. The dynamics of the swarm density $\rho$ are modeled by the \emph{continuity (transport) equation}
\begin{equation} \label{continuity_eqn}
    \partial_t \rho_t(x) ~=~ - \nabla \cdot \big(\rho_t(x) \, v_t(x)\big), 
    \hstm x\in\Omega\subset\R^n
\end{equation}
equipped with the zero-flux boundary condition $\hat{n} \cdot v_t = 0$ on $\partial \Omega$. We denote time with the subscript $t$ and space with $x$. Both $\rho$ and $v$ are spatio-temporal fields, where $\rho_t(x)\in\R$ is scalar valued and $v_t(x)\in\R^n$ is vector valued. The spatial differential (divergence) operator $\nabla \cdot$ takes a vector field $u$ to a scalar field 
\[
    \big(\nabla \cdot u \big)(x) := \partial_{x_1} u^{(1)}(x) + \cdots +  \partial_{x_n} u^{(n)}(x), 
\]
where $u^{(i)}$ is the $i$'th component of the vector field $u$.

Equation~\eqref{continuity_eqn} describes the transport of a ``density'' $\rho$ by the vector field $v$ under conservation of mass. The vector field $v$ is considered as the control input, while the target density (or ``setpoint'') $\mu$ is assumed to be some fixed (but otherwise arbitrary) density. Note that~\eqref{continuity_eqn} is a bilinear system since the state $\rho_t$ is multiplied by the control input $v_t$, and the resulting field is then processed by the linear differential operator $\nabla \cdot$. 

The goal is to design a feedback controller $\K: (\rho_t,\mu) \mapsto v_t$ which renders $\rho = \mu$ a globally asymptotically stable (GAS) fixed point of the dynamics \eqref{continuity_eqn} for each possible setpoint $\mu$. In general, one may design this controller so that the velocity at each point $v_t(x)$ depends on the entire densities $\rho_t$ and $\mu$. This type of controller may be called \emph{centralized} or \emph{nonlocal}, since $v_t(x)$ may depend on data far away from $x$. (This sort of controller was investigated in our earlier work on this problem, for example \cite{emerick2023continuum,emerick2024causal,emerick2025optimal}.) In this paper, we are instead interested in \emph{distributed} or \emph{local} solutions to this problem. The setting we study considers controllers $\K$ which are spatial  differential operators, i.e., of the form
\begin{multline} \label{distributed_controller_eq}
    \K(\rho_t,\mu)(x) ~=~ k(x,\rho_t(x), \nabla \rho_t(x), \nabla^2 \rho_t(x), ... , \\
    \mu(x), \nabla \mu(x), \nabla^2 \mu(x), ...)
\end{multline}
for some (generally nonlinear) function $k$. This choice will be explained further in Section \ref{distributed_control_sec}.

We also point out that since the density $\rho$ is a continuum object (i.e.~it is infinite-dimensional), it matters which metric we choose to measure the stability in. We argue that the natural metric in our setting is the \emph{Wasserstein distance} from optimal transport theory. This will be discussed further in Section \ref{metrics_sec}.

Putting this all together, we thus pose the following basic \emph{distributed swarm regulation problem}.

\begin{prob}[Distributed Swarm Regulation] \label{distributed_regulation_prob}
    Determine a controller $\K$ of the form \eqref{distributed_controller_eq} which renders $\rho = \mu$ a globally asymptotically stable fixed point of the dynamics \eqref{continuity_eqn}, in the Wasserstein sense, for each possible setpoint $\mu$.
\end{prob}

\begin{defn}[Global Asymptotic Stability]
    A controller renders the setpoint $\mu$ \emph{globally asymptotically stable} with respect to a metric $d$ if it renders $\mu$ both: a) \emph{Lyapunov stable} with respect to $d$: for all $\varepsilon > 0$, there exists $\delta = \delta(\varepsilon) > 0$ such that for all initial conditions $\rho_0$ and resulting evolutions $\rho_t$,
    \begin{equation}
        d(\rho_0 , \mu) < \delta \qquad \Rightarrow \qquad d(\rho_t , \mu) < \varepsilon \quad \forall t ,
    \end{equation}
    and b) \emph{globally attractive} with respect to $d$: for all initial conditions $\rho_0$ and resulting evolutions $\rho_t$,
    \begin{equation}
        d(\rho_t,\mu) ~\to~ 0 \qquad \text{as} \qquad t ~\to~ \infty .
    \end{equation}
\end{defn}

We call the above problem ``basic'' since we often want to strengthen these requirements. For example, it is often desirable that $\K$ have \emph{finite control effort}, or that $\K$ render $\mu$ not just \emph{asymptotically} stable but \emph{uniformly exponentially} stable. We may also want to impose restrictions on the structure of $\K$, including on its symmetries, on its \emph{order} as a differential operator, and so on. Several such properties are listed in Section \ref{desirable_properties_sec}.

In Sections \ref{gradient_controller_sec} and \ref{main_results_sec}, we explore fundamental performance limitations of low-order distributed controllers. Specifically, we present several results showing that controllers of the form \eqref{distributed_controller_eq} which solve Problem \ref{distributed_regulation_prob} must have order $\geq 1$ to possess several natural symmetry, stability, and efficiency properties.

\subsection{Distributed Controllers as Differential Operators} \label{distributed_control_sec}

In the finite-agent setting, it is relatively clear how to formalize the notion of distributed control: the swarm of agents is endowed with a network structure, and each agent's control is allowed to depend on its own state as well as on the states of the other agents in its neighbor set. In the continuum setting, however, the correct formulation is not so obvious.

Drawing inspiration from the finite-agent case, one approach might be to define neighbor sets for each agent. A natural choice for the neighbor set for an agent at location $x$ might be the set of all agents in the ball of radius $\varepsilon$ centered at $x$, $B_{\varepsilon,x}$. In this formulation, the control $v_t(x)$ would be allowed to depend on the restrictions $\rho_t |_{B_{\varepsilon,x}}$ and $\mu |_{B_{\varepsilon,x}}$ (possibly in addition to $t$ and/or $x$ itself). However, while this is a very intuitive way to model this problem, it also comes with certain analytical challenges. Namely, the controllers are functions of infinite-dimensional inputs, and the resulting dynamics are governed by partial integro-differential equations or similar. This is because even for small values of $\varepsilon$, the controllers and resulting dynamics are nonlocal, since the control $v_t(x)$ may depend on data $\varepsilon$-far away from $x$.

For this reason, we take a different approach. The basic idea is that a fully local controller should depend only on information about $\rho_t$ and $\mu$ available in an arbitrarily small neighborhood of each point. In smooth settings, the practically usable information about a function $f$ in a small neighborhood of a point $x$ is captured by the set of derivatives, or the \emph{jet}, of $f$ at $x$.

\begin{defn}[Jet of a Function]
    The \emph{jet} of a $C^\infty$ function $f$ at a point $x$ is the set of all derivatives of $f$ at $x$:
    \begin{equation}
        (\J f)(x) ~:=~ \{ f(x), \nabla f(x), \nabla^2 f(x), ... \} .
    \end{equation}
    The truncation of this set to the first $m$ derivatives is called the \emph{$m$-jet} of $f$ at $x$:
    \begin{equation}
        (\J_m f)(x) ~:=~ \{ f(x) , \nabla f(x) , ... , \nabla^m f(x) \} .
    \end{equation}
\end{defn}

This motivates modeling distributed controllers as functions of the jets of $\rho$ and $\mu$, i.e., as functions of the form \eqref{distributed_controller_eq}, or equivalently, as (generally nonlinear) differential operators. In practice, we expect these controllers to only depend on a finite number of derivatives of $\rho$, $\mu$, i.e., to have finite \emph{order}.

\begin{defn}
    A controller $\K$ of the form \eqref{distributed_controller_eq} is said to be of \emph{order $m$} if it only depends only on the first $m$ derivatives of $\rho$, $\mu$. That is, if it has the form
    \begin{equation}
        \K (\rho_t,\mu)(x) ~=~ k \big(x,(\J_m \rho_t)(x), (\J_m \mu)(x) \big) .
    \end{equation}
\end{defn}

Modeling distributed controllers in this way has several advantages. First, the resulting controllers and dynamics are fully local. In particular, the dynamics are governed by differential (rather than integro-differential) equations. Second, controllers which are designed in the continuum setting may be translated back to the finite-agent setting using standard techniques (e.g., finite-difference stencils) to approximate the $m$-jets. Third, the order of the controller $\K$ provides a natural measure of the implementation overhead of $\K$. In particular, the order $m$ is a proxy for the \emph{degree of localization} in a finite-agent implementation, since higher order controllers require larger neighbor sets in order to approximate $\J_m f$. Additionally, higher-order controllers require smoother densities $\rho$ and $\mu$ and result in higher-order dynamics: if $\K$ is order $m$, then $\rho$ and $\mu$ must be at least $m$-times differentiable, and the resulting dynamics have order $m+1$.

This observation about the relationship between the order of $\K$ and the associated overhead implies that lower-order controllers are usually more desirable. This raises the following natural question: \emph{given some desired system behavior, what is the minimum order required for a distributed controller to achieve that behavior?} In the following sections, we attempt to answer that question for the problem of stabilization of densities around arbitrary setpoints.

\subsection{Distributed Control and Communication} \label{control_communication_sec}

Before moving on with this objective, however, we take a tangent to introduce a second, more general model of \emph{distributed control and communication}. To motivate this model, observe that the structure~\eqref{distributed_controller_eq} of $\K$ implies that the control is a local function of $\rho$ and $\mu$ only (possibly in addition to $t,x$). In other words, the controller is able to observe the nearby swarm density $\rho$ and environmental variable $\mu$, but (a) it \emph{does not communicate directly with other nearby agents/controllers}, and (b) {\em does not have its own temporal memory}. 

Thus, we may imagine a more general scenario where agents/controllers store some ``relevant information'' $\chi$ which may be communicated with neighboring agents and also used in determination of the control input. The variable $\chi$ may be viewed as a spatially distributed field of ``controller states'' $\chi(x)$. When agents update their own controller state through communication with their neighbors, the overall field $\chi$ evolves according to some dynamics
\begin{equation}    \label{K_state_eq}
    \partial_t \chi_t ~=~ \A(\rho_t,\mu,\chi_t)
\end{equation}
for some (possibly nonlinear) differential operator
\begin{equation}
    \A(\rho,\mu,\chi)(x) ~=~ a \big(x,(\J\rho)(x),(\J\mu)(x),(\J\chi)(x) \big) .
\end{equation}
The controller $\K$ may then be allowed to depend on local information about $\chi$ as well, taking the more general form
\begin{equation}    \label{K_out_eq}
    \K(\rho,\mu,\chi)(x) ~=~ k \big(x,(\J\rho)(x),(\J\mu)(x),(\J\chi)(x) \big) .
\end{equation}
Note that in this setting,~\eqref{K_state_eq} is the ``state equation'' of the controller, while~\eqref{K_out_eq} is the controller's ``output equation''. 

Thus, our model of \emph{distributed swarm control} consists of the single PDE
\begin{equation}
    \partial_t \rho_t ~=~ - \nabla \cdot \big( \rho_t \, \K(\rho_t,\mu) \big) , \label{control_only_eq}
\end{equation}
while our model of \emph{distributed swarm control and communication} consists of the coupled system
\begin{align}
    \partial_t \rho_t ~&=~ - \nabla \cdot \big(\rho_t \, \K(\rho_t,\mu,\chi_t) \big) \label{control_eq} \\
    \partial_t \chi_t ~&=~ \A(\rho_t,\mu,\chi_t) . \label{communication_eq}
\end{align}
Note that in the distributed swarm control model \eqref{control_only_eq}, the design parameter is the operator $\K$, while in the distributed swarm control and communication model \eqref{control_eq}-\eqref{communication_eq}, the design parameters are not only the operators $\K$ and $\A$, but also the \emph{structure of the controller state $\chi$}. The control architecture~\eqref{control_eq}-\eqref{communication_eq} is depicted in Figure~\ref{rho_chi.fig}. 

\begin{figure}[t]
    \centering
    \includegraphics[width=0.45\textwidth]{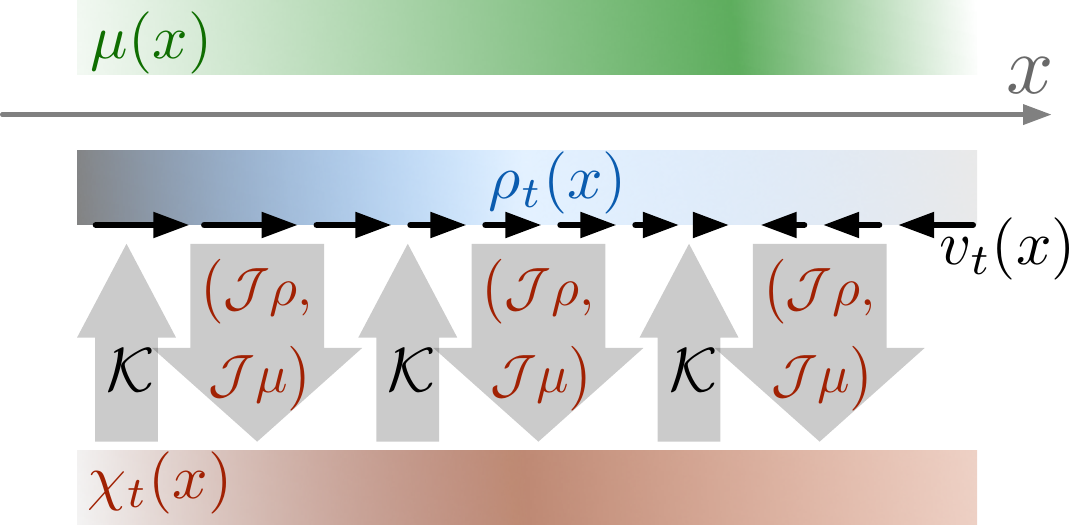} 

    \caption{Depiction of the distributed control architecture~\eqref{control_eq}-\eqref{communication_eq}. The density $\rho_t$ is transported by the ``control'' velocity vector field $v_t$ in order to stabilize around a reference density $\mu$. $\chi_t$ is the spatially-distributed controller state which evolves according to its own PDE whose inputs are the jets $({\cal J}\rho,{\cal J}\mu)$. The controller $\K$ determines the velocity field $v_t$. All mappings between system and controller depicted by gray wide arrows are actually continuum and local. 
        } 
  \label{rho_chi.fig} 
\end{figure} 

The distinction between the above models is analogous to the distinction between \emph{static} and \emph{dynamic} controllers in the classical sense. The controller in \eqref{control_only_eq} is ``memoryless'' in that it is a static function of the state $\rho$ (and setpoint $\mu$), while the controller in \eqref{control_eq}-\eqref{communication_eq} is an external dynamical system that we couple to the system we wish to control.

Many distributed control algorithms proposed in the literature can be represented using this more general model. In particular, a popular choice is to take each $\chi(x)$ to be an approximation of the global state and relevant environmental information $(\rho,\mu)$. That is, $\chi(x) = (\tilde{\rho}_x,\tilde{\mu}_x)$ is the agent at location $x$'s approximation of the entire (i.e.~global) densities $\rho$ and $\mu$. Thus, $\chi$ is a map from the spatial domain $\Omega$ to two copies of the space $\mathcal{P}(\Omega)$ of probability densities on $\Omega$. The communication dynamics \eqref{communication_eq} are designed so that each $(\tilde{\rho}_x,\tilde{\mu}_x)$ converges to the true global quantities $(\rho,\mu)$ (so long as $\rho$ is not changing). Then, the controller $\K$ may be chosen to be the optimal \emph{centralized} control law $\bar{K}$, only applied to the estimates $(\tilde{\rho}_x,\tilde{\mu}_x)$ instead of the true global quantities $(\rho,\mu)$:
\begin{equation}
    \K(\rho,\mu,\chi)(x) ~=~ \bar{K} (\chi(x)) ~=~ \bar{K} (\tilde{\rho}_x,\tilde{\mu}_x) .
\end{equation}
This approach may be considered to be a distributed analogue of \emph{certainty-equivalent control}: each agent receives local (partial) measurements and constructs an estimate of the true global state through the dynamics \eqref{communication_eq}. This estimate is then used in place of the true state in the optimal full-state feedback control $\bar{K}$.

In the rest of the paper, we will focus only on the simpler distributed swarm control model \eqref{control_only_eq} (with $\K$ taking the form \eqref{distributed_controller_eq}). We will leave the study of the more general distributed swarm control and communication model \eqref{control_eq}-\eqref{communication_eq} to future work.

\subsection{Metrics and Stability} \label{metrics_sec}

As mentioned earlier, since the density $\rho$ is a continuum (i.e.~infinite-dimensional) object, it matters which metric we choose to measure the stability in. The most well-known and widely used metric for measuring the distance between functions $\rho$, $\mu$ is of course the $L^p$ distance:
\begin{equation}
    \| \rho - \mu \|_{L^P}^p ~:=~ \int_\Omega | \rho(x) - \mu(x) |^p \, dx .
\end{equation}
However, in the setting of robotic swarms, we argue that this is not the natural metric to use. This is because the $L^p$ distance is essentially a ``pointwise'' metric: it picks up the differences between $\rho$ and $\mu$ at each point, but it is ``blind'' to the relationships between those points, i.e., to the structure of the underlying spatial domain $\Omega$.

However, in the case of physical swarms, we care very much about the structure of the spatial domain. The domain $\Omega$ comes equipped with a distance $d_\Omega$, and we consider an agent at location $x(t)$ to be converging to a target location $y$ if $d_\Omega(x(t),y) \to 0$. The fact that the $L^p$ distance is a pointwise metric means that it is insensitive to this sort of convergence\footnote{Indeed, given any target density $\mu$, it is possible to find a perturbation $\tilde{\mu}$ of the form $\tilde{\mu}(x) := \mu(\phi(x))$ for which $\sup_x d_\Omega(x,\phi(x)) \leq \varepsilon$, but $d_{L^p}(\mu,\tilde{\mu}) = O(1)$.}.

Thus, we seek a different metric in which to measure this convergence/stability, and the \emph{Wasserstein distance} from optimal transport theory\footnote{For a quick introduction, the authors recommend \cite{santambrogio2014introduction}.} captures our intuition much better here. It may be defined\footnote{Note that as it is written, $\gamma$ may be singular. In other words, \eqref{wass_dist_def} and \eqref{transport_plan_eq} should be interpreted in the sense of distributions. This definition is equivalent to the usual measure-theoretic one.} as
\begin{equation} \label{wass_dist_def}
W_p^p(\rho,\mu) ~:=~ \inf_{\gamma\in\Gamma(\rho,\mu)}
\int_{\Omega\times\Omega}\|x-y\|_p^p \, \gamma(x,y) \, dx \, dy ,
\end{equation}
where $\Gamma(\rho,\mu)$ denotes the set of \emph{transport plans} (i.e., joint distributions) between the densities $\rho$ and $\mu$:
\begin{equation} \label{transport_plan_eq}
    \Gamma(\rho,\mu) ~:=~ \bigl\{\gamma ~:~ \gamma \geq 0, ~ \textstyle\int \gamma \, dy =\rho, ~ \int \gamma \, dx = \mu \bigr\} .
\end{equation}
Note that $\rho$ and $\mu$ must have the same mass in order for this quantity to be defined, and without loss of generality, we may assume this mass to be 1. In other words, we assume that $\rho$ and $\mu$ are \emph{probability densities}.

There is a sense in which the Wasserstein distance is the canonical way in which to measure this sort of convergence, since it \emph{extends} the distance function $\|x-y\|_p$ from $\Omega$ to the space $\mathcal{P}(\Omega)$ of probability densities on $\Omega$. That is, the Wasserstein distance between Dirac masses reduces to the distance between their base points: $W_p(\delta_x,\delta_y) = \| x - y \|_p$.

In this paper, we will assume that the swarm $\rho$ and target $\mu$ live within a compact convex subset $\Omega \subset \R^n$. We will endow this subset with the standard Euclidean distance $\| x-y \|_2$, and thus we will endow the space $\mathcal{P}(\Omega)$ with the $2$-Wasserstein distance (that is, the Wasserstein distance as defined above for the specific choice $p=2$).

It is worth pointing out that on compact domains, the $L^p$ distance is a \emph{stronger} notion of distance than the Wasserstein distance. In particular, we have the inequality
\begin{equation}
    W_2^2(\rho,\mu) ~\leq~ \frac{1}{2} \, \operatorname{diam}(\Omega)^2 \, \operatorname{vol}(\Omega)^{1-\frac{1}{p}} \, \|\rho-\mu\|_{L^p} ,
\end{equation}
where $\operatorname{diam}$ denotes the \emph{diameter} of $\Omega$ and $\operatorname{vol}$ its volume. (This inequality is derived in Appendix \ref{technical_preliminaries_appendix}.) Thus, if $\rho$ is converging to $\mu$ in $L^p$, then it is also converging in $W_2$. However, the reverse does not necessarily hold. In other words, the Wasserstein distance measures a sort of \emph{weak} convergence.

\subsection{Other Desirable Properties of Controllers} \label{desirable_properties_sec}

Here, we list some other properties that we may want a controller to possess, beyond the simple \emph{global asymptotic stability (GAS)} or \emph{low order} already discussed. These include both stronger stability properties and structural/implementation-related properties.

\begin{defn}[Uniform Exponential Stability]
    A controller renders the setpoint $\mu$ \emph{uniformly locally exponentially stable (ULES)} with respect to a metric $d$ if there exist constants $\varepsilon > 0$, $C \geq 1$, and $\lambda > 0$ such that for all initial conditions $\rho_0$ with $d(\rho_0,\mu) < \varepsilon$, the evolution of $\rho_t$ satisfies
    \begin{equation} \label{ues_eq}
        d(\rho_t,\mu) ~\leq~ C e^{-\lambda t} \, d(\rho_0,\mu) .
    \end{equation}
    A controller renders the setpoint \emph{uniformly globally exponentially stable (UGES)} with respect to $d$ if \eqref{ues_eq} holds for all initial conditions $\rho_0$, independent of any $\varepsilon$.
\end{defn}

This is the type of strong stability that exists in stable finite-dimensional linear systems. Additionally, this is the type of stability in infinite-dimensional and nonlinear systems which may be detected through linearization.

\begin{defn}[Finite Control Effort]
    A controller $\K$ has \emph{finite control effort} with respect to a norm $\| \cdot \|$ if for every initial condition $\rho_0$, the evolution of $\rho_t$ is such that
    \begin{equation}
        \int_0^\infty \| \K(\rho_t,\mu ) \| \, dt ~<~ \infty .
    \end{equation}
\end{defn}

In particular, if a sufficiently regular controller $\K$ renders the setpoint $\mu$ GAS with finite control effort, then $\K(\rho,\mu) = 0$ at the setpoint $\rho = \mu$, and $\K(\rho_t,\mu) \to 0$ as $\rho_t \to \mu$ along trajectories.

Some examples of norms which we may be interested in include the $L^2$ norm $\| \cdot \|_{L^2}$ or kinetic energy $\| \cdot \|_{L^2(\rho)}$.

\begin{defn}[Symmetries/Equivariance] \label{symmetry_def}
    A controller $\K$ is \emph{symmetric/equivariant with respect to a group of transformations $G$} if $g_* \K = \K g_*$ for all $g \in G$.
\end{defn}

Here, we suppose that $G$ is a group of transformations from $\R^n \to \R^n$, such as a spatial translations or rotations. The symbol $g_*$ denotes the (tensorial\footnote{While the pushforward action may be defined for any tensor-valued quantity, we do not need the full generality here, since we only consider controllers of order at most 1 in this paper, and thus never deal with objects more complicated than vectors.}) pushforward by the map $g$, which may act on different types of objects in different ways. For example, if $f$ is a scalar function, then $g_*$ acts by $(g_*f)(x) =f(g^{-1}(x))$, whereas if $v$ is a vector field, then $g$ acts by $(g_*v)(x) = Dg(g^{-1}(x))[v(g^{-1}(x))]$, where $DF(y)[u]$ denotes the differential of the function $F$, evaluated at the point $y$, acting on the vector $u$.

In Appendix \ref{symmetries_appendix}, we show that a controller $\K$ of the form \eqref{distributed_controller_eq} is symmetric with respect to spatial translations if and only if it does not depend explicitly on $x$. (Similarly, controllers of the form \eqref{distributed_controller_eq} are always symmetric with respect to time translations because they do not depend explicitly on $t$.) Controllers which are symmetric with respect to spatial rotations have a more complicated form, and are treated in the appendix.

The practical motivation for considering these symmetries is that if the controller depends explicitly on any of these parameters, then it requires precise calibration together with some externally-provided references such as central clocks or GPS systems. There are many applications where these sorts of measurements cannot be expected to be available. It may also be argued that these controllers are not truly distributed since they depend on some external/centralized information.

\section{Example: Error-Gradient Controller} \label{gradient_controller_sec}

Here, we present a simple, order-1, distributed control law\footnote{Note that this controller was investigated previously in \cite{eren2017velocity}. It appears that the present treatment fixes a gap in that work regarding the positivity of the density $\rho$ (see Appendix \ref{error_gradient_appendix} for details). Thank you to Karthik Elamvazhuthi for pointing this out to us.} which solves Problem \ref{distributed_regulation_prob} with all the desirable properties mentioned in the previous section.

\begin{prop} \label{gradient_controller_prop}
    The following \emph{error-gradient} controller
    \begin{equation} \label{gradient_controller_eq}
        \K(\rho,\mu) ~=~ - \frac{1}{\rho} \nabla (\rho - \mu)
    \end{equation}
    solves Problem \ref{distributed_regulation_prob} for sufficiently nice $\rho_0$, $\mu$.
\end{prop}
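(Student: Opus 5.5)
Substituting \eqref{gradient_controller_eq} into \eqref{continuity_eqn}, the factor $\rho$ cancels the $1/\rho$:
\begin{equation*}
    \partial_t \rho_t ~=~ \nabla\cdot\big(\rho_t \tfrac{1}{\rho_t}\nabla(\rho_t-\mu)\big) ~=~ \Delta(\rho_t - \mu).
\end{equation*}
The zero-flux condition $\hat n\cdot v_t=0$ becomes the Neumann condition $\hat n\cdot\nabla(\rho_t-\mu)=0$ on $\partial\Omega$; we take $\mu$ with $\hat n\cdot\nabla\mu=0$ on $\partial\Omega$ as part of ``sufficiently nice''. The error $e_t:=\rho_t-\mu$ then solves the linear heat equation $\partial_t e = \Delta e$ with homogeneous Neumann data and zero mean, since $\rho_0,\mu$ are probability densities. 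So the closed loop is a linear, well-understood PDE in the error.

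The plan has three steps.

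\textbf{Step 1: decay of the error.} Expand $e_t$ in Neumann eigenfunctions of $-\Delta$ on $\Omega$. The mean is zero, so the constant mode vanishes, and since $\Omega$ is bounded and convex (hence Lipschitz) the Poincar\'e--Wirtinger inequality gives a spectral gap $\lambda_1>0$. This yields
\begin{equation*}
    \|e_t\|_{L^2}\le e^{-\lambda_1 t}\|e_0\|_{L^2}.
\end{equation*}
Higher Sobolev norms decay similarly, and parabolic smoothing gives $L^\infty$ decay $\|e_t\|_{L^\infty}\le Ce^{-\lambda_1 t}$ for $t\ge 1$ (for example via $e_t = S_{t-1}e_1$ and ultracontractivity of the Neumann heat semigroup).

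\textbf{Step 2: positivity and well-posedness.} The controller has a $1/\rho$ singularity, so we must show $\rho_t>0$ for all $t$; this is the gap noted in the footnote. Write $\rho_t=\mu+e_t$ and observe that $\rho$ itself satisfies $\partial_t\rho=\Delta\rho-\Delta\mu$, which carries the source $-\Delta\mu$, so there is no direct maximum principle for $\rho$. Instead, take $\mu\ge c_\mu>0$ and $\rho_0>0$ smooth, and use the comparison principle for the heat equation on $e$:
\begin{equation*}
    \min e_0 ~\le~ e_t ~\le~ \max e_0 .
\end{equation*}
Since $\min e_0=\min(\rho_0-\mu)$, this gives $\rho_t\ge\mu+\min(\rho_0-\mu)$, which is positive only if $\rho_0-\mu>-c_\mu$. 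For general $\rho_0$ this lower bound can fail, so ``sufficiently nice'' should be read as requiring $\inf_x(\rho_0-\mu)>-\inf_x\mu$. A sharper alternative is to combine Step 1's uniform decay for $t\ge T$ with a short-time argument. I expect this to be the main obstacle: pinning down exactly which niceness hypothesis guarantees $\rho_t>0$ for all time, so that $v_t$ is defined and the closed loop really is the heat equation.

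\textbf{Step 3: transfer to Wasserstein stability.} Use the inequality $W_2^2(\rho,\mu)\le \tfrac12\operatorname{diam}(\Omega)^2\operatorname{vol}(\Omega)^{1/2}\|\rho-\mu\|_{L^2}$ stated earlier. Step 1 then gives $W_2(\rho_t,\mu)\to 0$, which is global attractivity.

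For Lyapunov stability the $W_2$ ball is the subtle point, because $W_2$-closeness does not control $L^2$ norms. I would argue via the Neumann heat flow directly. The error dynamics correspond to a $1$-Lipschitz-type contraction, and a linearization around a positive $\mu$ gives the weighted $\dot H^{-1}$ norm $\|e\|_{\dot H^{-1}(\mu)}$, which is equivalent to $W_2$ locally. Since $\dot H^{-1}$ is nonincreasing under $\partial_t e=\Delta e$, this gives stability. If that equivalence is awkward, a weaker but acceptable fallback is to state stability for nice data, measured via the $L^2$ bound through the inequality above.
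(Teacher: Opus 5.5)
Your reduction to the Neumann heat equation for $e=\rho-\mu$ is sound. So is the positivity argument: your condition $\inf(\rho_0-\mu)>-\inf\mu$ is exactly the paper's standing hypothesis, proved with the same maximum-principle bound. Global attractivity via $L^2$ decay plus $W_2^2\lesssim\|\rho-\mu\|_{L^2}$ is also fine. The gap is the other half of Problem~1, Lyapunov stability in $W_2$, which you do not actually establish. Two things go wrong.

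First, the norm that the heat flow makes nonincreasing is the \emph{unweighted} $\dot H^{-1}$ norm, since $\tfrac12\tfrac{d}{dt}\|e_t\|_{\dot H^{-1}}^2=-\|e_t\|_{L^2}^2$. The weighted norm $\|\cdot\|_{\dot H^{-1}(\mu)}$ obtained by linearizing $W_2$ at $\mu$ need not be monotone, because its time derivative picks up a term in $\nabla\mu$ of indefinite sign.

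Second, and more seriously, the linearization $W_2(\mu,\mu+\epsilon h)=\epsilon\|h\|_{\dot H^{-1}(\mu)}+o(\epsilon)$ is asymptotic. It holds as $e\to 0$ in a strong topology, and $W_2(\rho_0,\mu)<\delta$ gives no such smallness. This is precisely the worry you raise yourself. Your fallback, smallness of $\|e_0\|_{L^2}$, proves stability for an $L^2$-ball, which is not the definition being asked for.

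The missing ingredient, which the paper uses, is the \emph{non-asymptotic} comparison (Santambrogio, Thm.~5.34). It holds whenever both densities lie in $[a,b]$ with $a>0$:
\[
b^{-1/2}\|\rho-\mu\|_{\dot H^{-1}}\le W_2(\rho,\mu)\le a^{-1/2}\|\rho-\mu\|_{\dot H^{-1}}.
\]
The two-sided bound $\min e_0\le e_t\le \max e_0$ you already wrote in Step~2 supplies $a=\min\mu+\min(\rho_0-\mu)$ and $b=\max\mu+\max(\rho_0-\mu)$ for all $t$. Poincar\'e turns the $\dot H^{-1}$ identity into $\|e_t\|_{\dot H^{-1}}\le e^{-\lambda_1 t}\|e_0\|_{\dot H^{-1}}$. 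Together these give $W_2(\rho_t,\mu)\le\sqrt{b/a}\,e^{-\lambda_1 t}W_2(\rho_0,\mu)$, which yields stability and exponential attractivity at once.

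Note that $b$ depends on $\max\rho_0$, so this $\delta(\varepsilon)$ is uniform only over initial data with a common sup bound. If you want a fully uniform $\delta$, your ``$1$-Lipschitz contraction'' remark can be made rigorous. On a convex domain the Neumann heat semigroup $P_t$ is self-adjoint and does not increase Lipschitz constants. By Kantorovich--Rubinstein duality, $W_1(\rho_t,\mu)=\sup_{\mathrm{Lip}(f)\le 1}\langle P_t f,e_0\rangle\le W_1(\rho_0,\mu)$. Combining this with $W_1\le W_2$ and $W_2^2\le\operatorname{diam}(\Omega)\,W_1$ gives $W_2(\rho_t,\mu)^2\le\operatorname{diam}(\Omega)\,W_2(\rho_0,\mu)$.
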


\begin{sketch}
Define the error by $e := \rho - \mu$ and consider the error dynamics:
\begin{multline}
    \partial_t e ~=~ \partial_t (\rho - \mu) ~=~ \partial_t \rho - 0 ~=~ - \nabla \cdot (\rho \, \K(\rho,\mu)) \\
    ~=~ \nabla \cdot \left( \rho \, \frac{1}{\rho} \, \nabla (\rho - \mu) \right) ~=~ \nabla \cdot \nabla e ~=~ \Delta e .
\end{multline}
In other words, the error dynamics are exactly the heat equation. Since $\rho$ and $\mu$ have mass 1, $e$ has mean 0. Thus every $e$ converges to 0, rendering $\mu$ GAS. See Appendix \ref{error_gradient_appendix} for a rigorous treatment.
\end{sketch}

Some remarks are in order here. First, we call the controller \eqref{gradient_controller_eq} ``error-gradient'' because it acts so that agents descend the local gradient of the error $\nabla e = \nabla(\rho-\mu)$. The weighting by $1/\rho$ may be interpreted as a form of \emph{feedback linearization}, since it cancels the factor of $\rho$ in the continuity equation to yield linear closed-loop dynamics. This is done mainly to simplify the resulting analysis. Second, this is a desirable controller in almost every respect: it is UGES both in the $W_2$ and the stronger $L^p$ senses, it has finite control effort, and it has all of the symmetries that we would hope for. It also has order 1, depending only on a single derivative of $\rho$, $\mu$.

\begin{prop} \label{gradient_controller_uges_prop}
    The controller \eqref{gradient_controller_eq} renders the setpoint $\mu$ uniformly globally exponentially stable in both the $W_2$ and $L^p$ senses.
\end{prop}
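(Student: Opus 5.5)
The plan is to work entirely with the error $e_t := \rho_t - \mu$. By the sketch of Proposition \ref{gradient_controller_prop}, $e$ solves $\partial_t e = \Delta e$ on $\Omega$. The zero-flux condition $\hat n \cdot v_t = 0$ together with $\rho v = -\nabla e$ gives the homogeneous Neumann condition $\hat n\cdot\nabla e = 0$, and $\int_\Omega e_t = 0$ for all $t$. So the closed loop is the Neumann heat semigroup $T_t = e^{t\Delta}$ restricted to mean-zero functions. Both estimates then come from the spectral gap $\lambda_1>0$ of the Neumann Laplacian on the bounded convex (hence Lipschitz) domain $\Omega$, that is, the Poincar\'e--Wirtinger inequality $\lambda_1\|f\|_{L^2}^2 \le \|\nabla f\|_{L^2}^2$ for mean-zero $f$.

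\emph{$L^p$ estimate.} First, $\frac{d}{dt}\|e_t\|_{L^2}^2 = -2\|\nabla e_t\|_{L^2}^2 \le -2\lambda_1 \|e_t\|_{L^2}^2$ gives $\|e_t\|_{L^2}\le e^{-\lambda_1 t}\|e_0\|_{L^2}$.

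Second, the Neumann heat semigroup is positivity preserving and mass preserving, so by the maximum principle it is a contraction on every $L^p$, $1\le p\le\infty$. This already gives the bound for $t \le 1$ with $C = e^{\lambda_1}$.

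For $t\ge1$, I would factor $T_t = T_{1/2}\,T_{t-1}\,T_{1/2}$ and use ultracontractivity of the Neumann heat kernel on a bounded convex domain, namely $\|T_{1/2}\|_{L^1\to L^\infty}<\infty$. Since $\Omega$ has finite volume, this makes $T_{1/2}$ bounded $L^p\to L^2$ and $L^2\to L^p$, and $T_{1/2}$ preserves zero mean. Combining these with the $L^2$ decay of $T_{t-1}$ gives
\[
\|e_t\|_{L^p}\le C e^{-\lambda_1 t}\|e_0\|_{L^p}
\]
for all $p$ and all mean-zero $e_0$. For $1<p<\infty$, Riesz--Thorin interpolation between the $L^2$ decay and $L^1$/$L^\infty$ contraction would also work, but the factorization handles the endpoints $p=1,\infty$ as well.

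\emph{$W_2$ estimate.} Here I would use the negative Sobolev norm $\|e\|_{\dot H^{-1}} := \|\nabla \phi\|_{L^2}$, where $\Delta\phi = e$ with Neumann conditions. Along the heat flow,
\[
\frac{d}{dt}\|e_t\|_{\dot H^{-1}}^2 = -2\|e_t\|_{L^2}^2 \le -2\lambda_1\|e_t\|_{\dot H^{-1}}^2 ,
\]
where the inequality is Poincar\'e applied to $\phi$. So $\|e_t\|_{\dot H^{-1}}$ decays like $e^{-\lambda_1 t}$. I would then transfer this to $W_2$ via Peyre's comparison between $W_2$ and $\dot H^{-1}$. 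If densities are bounded by $M$, then $\|\rho-\mu\|_{\dot H^{-1}}\le \sqrt{M}\,W_2(\rho,\mu)$. If in addition $\mu\ge c>0$, then $W_2(\rho,\mu)\le \frac{2}{\sqrt c}\|\rho-\mu\|_{\dot H^{-1}}$. The upper bound along the trajectory is available from the maximum principle: $\|\rho_t\|_\infty \le \|\mu\|_\infty + \|e_0\|_\infty$. Chaining these gives $W_2(\rho_t,\mu)\le C e^{-\lambda_1 t} W_2(\rho_0,\mu)$.

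\emph{Main obstacle.} I expect the $W_2$ step to be the hardest, specifically making the constant $C$ uniform. Peyre's inequalities need a lower bound on $\mu$ and an $L^\infty$ bound on $\rho$, so as stated the constant depends on $\inf\mu$ and $\sup\rho_0$. I would resolve this by reading ``sufficiently nice'' as in Proposition \ref{gradient_controller_prop}: $\mu$ is bounded below, and the admissible initial data are bounded by a fixed constant. A secondary point to verify, deferred to Appendix \ref{error_gradient_appendix}, is that $\rho_t$ stays positive so that the controller is well defined.
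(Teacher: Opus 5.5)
Your proof is correct. For the $L^2$ and $W_2$ parts it follows the paper's argument. For $\|e_t\|_{L^2}$ both use the energy identity with Poincar\'e--Wirtinger and Gr\"onwall. For $W_2$ both use the identity $\frac{d}{dt}\|e_t\|_{\dot H^{-1}}^2=-2\|e_t\|_{L^2}^2$, Poincar\'e applied to the potential, and a $W_2$--$\dot H^{-1}$ comparison fed by maximum-principle bounds. The paper does this $W_2$ step inside its proof of Proposition~\ref{gradient_controller_prop} and only cites it here.

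The genuine difference is the $L^p$ claim. The paper's appendix restates the proposition with $L^2$ in place of $L^p$ and proves only that case, so $p\neq 2$ is never established there. You combine short-time $L^p$-contractivity with the factorization $T_t=T_{1/2}T_{t-1}T_{1/2}$ and ultracontractivity of the Neumann heat semigroup. This supplies the result for every $1\le p\le\infty$, with the same rate $\lambda_1$ and a constant depending only on $\Omega$ and $p$. The price is importing a heat-kernel bound for Lipschitz (here convex) domains, whereas the paper uses only an elementary energy estimate. Your Riesz--Thorin aside would also give UGES for $1<p<\infty$. However, it needs $T_t$ composed with the projection onto mean-zero functions and only yields a degraded rate $\theta\lambda_1$, so the factorization is the better tool.

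On the $W_2$ side, you use Peyre's one-sided inequalities, which give the constant $2\sqrt{M/\inf\mu}$ with $M=\max(\|\rho_0\|_{L^\infty},\|\mu\|_{L^\infty})$. This needs bounds only at time $0$ plus a lower bound on $\mu$; the trajectory bound you quote is not actually used. The paper instead uses the two-sided Lemma~\ref{h_minus_one_lem}, which requires the sandwich $a\le\rho_t,\mu\le b$ along the whole trajectory. In both versions the constant depends on the initial datum, as you correctly flag and as is implicit in the paper's $\sqrt{b/a}$.

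Your computations also avoid two slips in the paper's text:
\begin{itemize}
\item the dropped factor $2$ in $\frac{d}{dt}\|e_t\|_{L^2}^2=-2\|\nabla e_t\|_{L^2}^2$;
\item the inverted power in the lower bound of Lemma~\ref{h_minus_one_lem}, which should read $\|\mu-\nu\|_{\dot H^{-1}}\le b^{1/2}\,W_2(\mu,\nu)$.
\end{itemize}
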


\begin{prop} \label{gradient_controller_effort_prop}
    The controller \eqref{gradient_controller_eq} has finite control effort with respect to the $L^2$ norm.
\end{prop}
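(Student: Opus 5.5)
We need to show that $\int_0^\infty \|\K(\rho_t,\mu)\|_{L^2}\,dt<\infty$, where $\K(\rho_t,\mu) = -\nabla e_t/\rho_t$ and $e_t = \rho_t-\mu$ solves the Neumann heat equation $\partial_t e = \Delta e$ on $\Omega$ with zero mean (Proposition \ref{gradient_controller_prop}). The plan is to split the effort into two factors, $1/\rho_t$ and $\nabla e_t$, and to bound them separately. This gives
\begin{equation*}
\|\K(\rho_t,\mu)\|_{L^2} ~\leq~ \Big(\inf_{x\in\Omega}\rho_t(x)\Big)^{-1}\,\|\nabla e_t\|_{L^2},
\end{equation*}
so it suffices to (i) bound $\rho_t$ uniformly away from zero, and (ii) show that $\|\nabla e_t\|_{L^2}$ is integrable in time.

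For (i) I use the maximum principle for the Neumann heat equation. Since $\rho_t = \mu + e_t$ and $\mu$ is time independent, $\rho_t$ is not itself a heat solution. However, $e_t$ is, so $\min e_t$ is nondecreasing in $t$ and $e_t\to 0$ uniformly; uniform convergence follows from Proposition \ref{gradient_controller_uges_prop} in $L^\infty$, or from parabolic smoothing for $t\geq 1$.

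This step is the main obstacle. I will assume that the "sufficiently nice" hypotheses include $\mu\geq c>0$ and $\rho_0>0$ smooth, as in the positivity argument of Appendix \ref{error_gradient_appendix}, which guarantees $\rho_t>0$ for all $t$. On a compact interval $[0,T]$, continuity and positivity then give $\inf_{[0,T]\times\Omega}\rho\geq c_T>0$. For $t\geq T$ with $T$ chosen so that $\|e_t\|_\infty\leq c/2$, we get $\rho_t\geq c/2$. Together these give a uniform lower bound $\rho_t\geq c_*>0$ for all $t\geq 0$.

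For (ii) I use energy estimates. First, $\frac{d}{dt}\tfrac12\|e_t\|_{L^2}^2 = -\|\nabla e_t\|_{L^2}^2$. Second, $\frac{d}{dt}\tfrac12\|\nabla e_t\|_{L^2}^2 = -\|\Delta e_t\|_{L^2}^2\leq 0$, which follows by integration by parts using $\partial_n e=0$. Because $e_t$ has mean zero, the Poincaré inequality on convex $\Omega$ gives $\|\nabla e\|_{L^2}\leq C_\Omega\|\Delta e\|_{L^2}$. Hence $\|\nabla e_t\|_{L^2}\leq e^{-\lambda_1 t}\|\nabla e_0\|_{L^2}$, where $\lambda_1>0$ is the first nonzero Neumann eigenvalue. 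The same conclusion follows from the spectral expansion in Neumann eigenfunctions. This requires $\nabla e_0\in L^2$, which is covered by the niceness assumption. Alternatively, one can use smoothing on $[0,1]$ together with $\int_0^1\|\nabla e_t\|\,dt\leq(\int_0^1\|\nabla e_t\|^2)^{1/2}\leq \|e_0\|_{L^2}/\sqrt2$, which needs only $e_0\in L^2$.

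Combining (i) and (ii) gives $\int_0^\infty\|\K(\rho_t,\mu)\|_{L^2}\,dt\leq c_*^{-1}\|\nabla e_0\|_{L^2}/\lambda_1<\infty$.
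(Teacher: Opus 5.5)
Your step (ii) and the final assembly follow the paper's argument. The paper also bounds $\|\K(\rho_t,\mu)\|_{L^2}\le c_0^{-1}\|\nabla e_t\|_{L^2}$. It obtains $\|\nabla e_t\|_{L^2}\le e^{-\lambda_1 t}\|\nabla e_0\|_{L^2}$ from the Neumann eigenfunction expansion; your identity $\tfrac{d}{dt}\tfrac12\|\nabla e_t\|^2=-\|\Delta e_t\|^2$ with $\|\Delta e\|^2\ge\lambda_1\|\nabla e\|^2$ is an equivalent route. It then integrates in time.

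The gap is in step (i). The hypotheses you adopt, $\mu\ge c>0$ and $\rho_0>0$ smooth, do \emph{not} imply $\rho_t>0$ for all $t$. The paper itself warns, in a footnote, that positivity can fail when $\mu$ has tall isolated spikes. For instance, on $\Omega=[0,1]$ take $\rho_0\equiv 1$ and $\mu=c+M\eta_\epsilon(\cdot-\tfrac12)$, with $c+M=1$ and $\eta_\epsilon$ a smooth unit-mass bump of width $\epsilon$. Then $e_t=M-M\,e^{t\Delta}\eta_\epsilon$, so outside the bump $\rho_t=1-M\,e^{t\Delta}\eta_\epsilon$. At distance $2\sqrt t$ from the spike, with $\epsilon\ll\sqrt t$, the free heat kernel gives $\rho_t\approx 1-Me^{-1}(4\pi t)^{-1/2}$, and Neumann reflections only make it smaller. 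This is negative for small $t$. Hence $\rho_t$ vanishes somewhere, $\K=-\nabla e_t/\rho_t$ ceases to be defined, and your compactness argument on $[0,T]$ has no positive infimum to start from. The compactness-plus-uniform-convergence logic would be fine if positivity were already known, but your stated justification for positivity is false.

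The missing ingredient is the right hypothesis: the paper's condition \eqref{positivity_assumption}, $\min_\Omega\mu+\min_\Omega(\rho_0-\mu)\ge c_0>0$. With it, the fact you already noted, that $\min_\Omega e_t$ is nondecreasing, gives immediately
$\rho_t(x)=\mu(x)+e_t(x)\ge\min_\Omega\mu+\min_\Omega e_0\ge c_0$ for all $t\ge 0$. This is exactly how the paper proceeds. It makes your compactness and uniform-convergence detour unnecessary and yields the explicit constant $c_*=c_0$ and the bound $(c_0\lambda_1)^{-1}\|\nabla e_0\|_{L^2}$.
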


\begin{prop} \label{gradient_controller_symmetries_prop}
    The controller \eqref{gradient_controller_eq} is symmetric with respect to spatial translations and spatial rotations.
\end{prop}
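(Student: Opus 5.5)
We verify Definition \ref{symmetry_def} directly, i.e.\ we show $g_*\K(\rho,\mu) = \K(g_*\rho, g_*\mu)$ for every rigid motion $g(x) = Qx + b$ with $Q$ orthogonal ($Q^\top Q = I$) and $b \in \R^n$. Translations are the case $Q = I$ and rotations the case $b = 0$, so one computation covers both. For translations alone one could instead appeal to the appendix characterization: $\K$ in \eqref{gradient_controller_eq} has no explicit $x$-dependence, so it is translation-equivariant. The direct computation is just as short and also handles rotations.

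The first step is to record how the pushforwards act. For scalar densities, $(g_*\rho)(x) = \rho(g^{-1}(x))$. Since $g$ is a rigid motion, $|\det Q| = 1$, so this agrees with the density-pushforward convention and preserves mass and positivity. For vector fields, $Dg \equiv Q$, so $(g_*v)(x) = Q\,v(g^{-1}(x))$.

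The second step is the chain rule for gradients. For a scalar $f$, $\nabla (f\circ g^{-1})(x) = (Dg^{-1})^\top \nabla f(g^{-1}(x)) = (Q^{-1})^\top \nabla f(g^{-1}(x))$. Orthogonality gives $(Q^{-1})^\top = Q$, so this equals $Q\,\nabla f(g^{-1}(x))$. In other words, the gradient of a pushed-forward scalar is the pushforward of the gradient:
\[
\nabla (g_* f) = g_*(\nabla f).
\]

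The third step combines these. Using linearity of the gradient,
\[
\K(g_*\rho, g_*\mu)(x) = -\frac{1}{\rho(g^{-1}x)}\, Q\, \nabla(\rho-\mu)(g^{-1}x) = Q\,\K(\rho,\mu)(g^{-1}x) = (g_*\K(\rho,\mu))(x).
\]
The pointwise factor $1/\rho$ passes through unchanged because it is a scalar evaluated at the same point.

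The only step needing care is the second: making sure the gradient transforms covariantly with the tensorial pushforward. This is exactly where orthogonality is used. For a general linear map $A$ one would get $A^{-\top}$ in place of $A$, and equivariance would fail.
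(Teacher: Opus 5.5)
Your proof is correct and is essentially the paper's argument: the paper also verifies equivariance directly, using $\nabla(g_*\rho)(x) = (R^{-1})^\top\nabla\rho(y) = R\,\nabla\rho(y)$ for orthogonal $R$ and matching this against the vector pushforward $(g_*v)(x) = R\,v(y)$. The only difference is packaging: you handle a general rigid motion $x\mapsto Qx+b$ in one computation (which also covers reflections), whereas the paper treats translations and rotations about a point $z$ as two separate cases.
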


Proofs of these propositions are provided in Appendix \ref{error_gradient_appendix}.

\section{Fundamental Limitations of Pointwise Controllers} \label{main_results_sec}

Following the example of the previous section, the natural question is then: \emph{can we do better?} That is, does there exist a controller with order 0 which achieves these same behaviors? As may be expected simply from intuition, the answer is ``no''. However, the story is nuanced, and exposing this nuance sheds light on the nature of the problem more generally. The main idea is stated in the following ``heuristic''.

\begin{heur}
    Pointwise (i.e., order-0) controllers may solve Problem \ref{distributed_regulation_prob}, but only through mechanisms which are based on \emph{mixing}. These mechanisms are much weaker and less structured than those available to controllers with order $\geq 1$. In particular, pointwise controllers may only stabilize in a weak sense, and they do not possess finite control effort or the natural symmetries listed in Section \ref{desirable_properties_sec}.
\end{heur}

The results of this section are organized into three parts, addressing the symmetries of such controllers, their stability properties, and the mixing mechanisms that they must ultimately be based on.

We begin by expanding on our intuition about why it seems that pointwise controllers should not possess the same sorts of desirable properties as controllers with order $\geq 1$. Observe that if a controller $\K$ depends on any derivatives of $\rho$, $\mu$ -- as in \eqref{gradient_controller_eq} -- then the control $v_t(x)$ depends on at least some information about $\rho$, $\mu$ in a small neighborhood of $x$. (Equivalently, in a finite-agent approximation, each agent has a nonempty neighbor set.) However, any pointwise/order-0 controller
\begin{equation} \label{pointwise_control_eq}
    \K(\rho,\mu)(x) ~=~ k(x,\rho(x),\mu(x))
\end{equation}
is ``fully decentralized'' in the sense that the control $v_t(x)$ depends only on the information exactly at $x$ (not in a small neighborhood). (Equivalently, in a finite-agent approximation, each neighbor set is empty.) Intuitively, this blocks control mechanisms which leverage the local geometry of $\rho$, $\mu$ or information exchange between controllers.

In the following sections, we attempt to formalize these intuitions and investigate their consequences, ultimately arriving at the conclusion that the only stabilization mechanism available to pointwise controllers is \emph{mixing}.

\subsection{Symmetries and Linearized Dynamics}

First, we investigate the symmetries that pointwise controllers can possess. For example, it is immediate that this pointwise structure is incompatible with rotational symmetry.

\begin{prop} \label{rotational_controller_prop}
    The only controller of the form \eqref{pointwise_control_eq} which is symmetric under spatial rotations is the trivial controller
    \begin{equation}
        k(x,\rho(x),\mu(x)) ~=~ 0 .
    \end{equation}
\end{prop}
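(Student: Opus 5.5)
We prove that rotational equivariance forces any pointwise controller $k$ to vanish. The strategy is to write out the equivariance condition of Definition~\ref{symmetry_def} explicitly for a rotation $g(x)=Rx$, with $R\in SO(n)$, and evaluate it at points where the rotation acts trivially on the data but nontrivially on vectors.

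For a rotation the differential is constant, $Dg \equiv R$. The pushforwards of the scalar densities are
\[
(g_*\rho)(x)=\rho(R^{-1}x), \qquad (g_*\mu)(x)=\mu(R^{-1}x),
\]
and the pushforward of the vector field $v=\K(\rho,\mu)$ is $(g_*v)(x)=R\,v(R^{-1}x)$. The condition $\K(g_*\rho,g_*\mu)=g_*\K(\rho,\mu)$ therefore reads, pointwise,
\[
k\big(x,\rho(R^{-1}x),\mu(R^{-1}x)\big) ~=~ R\,k\big(R^{-1}x,\rho(R^{-1}x),\mu(R^{-1}x)\big)
\]
for all admissible $\rho,\mu$ and all $x$.

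Since $\rho$ and $\mu$ are arbitrary (smooth, positive, unit mass), the pair $(a,b)=(\rho(R^{-1}x),\mu(R^{-1}x))$ can be prescribed freely at any single point. Setting $y=R^{-1}x$, the condition becomes the functional identity
\[
k(Ry,a,b)=R\,k(y,a,b) \qquad \text{for all } y\in\Omega,\ a,b>0,\ R\in SO(n).
\]
Taking $y=0$ gives $k(0,a,b)=R\,k(0,a,b)$ for every rotation $R$. In dimension $n\ge 2$ the only vector fixed by all of $SO(n)$ is $0$, so $k(0,a,b)=0$. (In $n=1$, $SO(1)$ is trivial, so we interpret the rotation group as $O(n)$; the reflection $x\mapsto -x$ then plays the same role.)

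Vanishing at the origin alone does not give $k\equiv 0$. For instance, $k(y,a,b)=\phi(|y|,a,b)\,y$ satisfies the identity above for every $R$. To kill such radial fields we use that rotations about an arbitrary center must also be symmetries. The paper implicitly means rigid rotations about any point, and by Appendix~\ref{symmetries_appendix} translation symmetry is natural to pair with them. The full statement follows most cleanly in two steps. First, use either translation invariance, which makes $k$ independent of $x$, or rotations about an arbitrary center $c$ applied with $y=c$, which give $k(c,a,b)=R\,k(c,a,b)$ for all $R$. Second, conclude $k(c,a,b)=0$ at every $c$.

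The main obstacle is fixing the meaning of ``symmetric under spatial rotations''. If only rotations about the origin were allowed, the radial counterexample above shows the proposition would be false. So we commit to the reading of rotations about every center (the full Euclidean rotation group), with a remark on $n=1$. Once that reading is in place, the argument is just the fixed-point computation at the center of rotation. It uses no derivative information, and that absence is exactly the point: an order-$0$ controller has no local vector such as $\nabla\rho$ from which to build a nonzero equivariant output.
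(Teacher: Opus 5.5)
Your argument is correct and is essentially the paper's: the appendix uses rotations $g_{R,z}$ about an arbitrary center $z$, evaluates the equivariance identity at the center (where $x$, $\rho(x)$, $\mu(x)$ are all fixed), and concludes from $k(x,r,m)=R\,k(x,r,m)$ for all $R\in SO(n)$ that $k=0$. Your two extra remarks are valid refinements the paper leaves implicit: rotations about a single origin would not suffice (your radial field $\phi(|y|,a,b)\,y$ is a genuine counterexample), and for $n=1$ the claim needs reflections because $SO(1)$ is trivial.
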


\begin{sketch}
Consider a rotation about $x$. Since $x$, $\rho(x)$, and $\mu(x)$ are all fixed by this rotation, so must be the vector $k(x,\rho(x),\mu(x))$. However, the only vector which is fixed by all rotations is 0. See Appendix \ref{symmetries_appendix} for a rigorous treatment.
\end{sketch}

\begin{cor} \label{rotational_controller_cor}
    No controller of the form \eqref{pointwise_control_eq} which is symmetric under spatial rotations solves Problem \ref{distributed_regulation_prob}.
\end{cor}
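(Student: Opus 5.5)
By Proposition~\ref{rotational_controller_prop}, any controller of the form \eqref{pointwise_control_eq} that is symmetric under spatial rotations must be the trivial controller $k \equiv 0$. The corollary therefore reduces to showing that $\K \equiv 0$ does not solve Problem~\ref{distributed_regulation_prob}.

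With $\K \equiv 0$, the closed-loop dynamics \eqref{control_only_eq} become $\partial_t \rho_t = -\nabla\cdot(\rho_t \cdot 0) = 0$. Every density is then a fixed point, and $\rho_t = \rho_0$ for all $t \geq 0$. Fix any admissible setpoint $\mu$ and choose an initial condition $\rho_0 \neq \mu$. For concreteness, take $\rho_0$ a smooth probability density on $\Omega$ that differs from $\mu$ on a set of positive measure, for example $\mu$ composed with a small translation and renormalized, or simply the uniform density if $\mu$ is not uniform.

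Since $W_2$ is a metric on $\mathcal{P}(\Omega)$, we have $W_2(\rho_0,\mu) > 0$. Hence $W_2(\rho_t,\mu) = W_2(\rho_0,\mu)$ does not tend to $0$, and global attractivity fails. Global asymptotic stability therefore fails, so the trivial controller does not solve Problem~\ref{distributed_regulation_prob}. Combining this with Proposition~\ref{rotational_controller_prop} proves the corollary.

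The only point needing care is that a suitable $\rho_0 \neq \mu$ exists within whatever regularity class the problem requires. This is immediate, since $\Omega$ has nonempty interior and the space of smooth probability densities on it contains more than one element. The only other ingredient is that $W_2$ separates distinct densities, which holds because it is a genuine metric on $\mathcal{P}(\Omega)$. Neither step is a real obstacle; the substance of the result lies entirely in Proposition~\ref{rotational_controller_prop}.
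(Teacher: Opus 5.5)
Your proposal is correct and matches the paper's proof: invoke Proposition~\ref{rotational_controller_prop} to reduce to $\K \equiv 0$, observe that every trajectory is constant ($\rho_t = \rho_0$), and conclude that attractivity fails. Your explicit choice of some $\rho_0 \neq \mu$, with $W_2(\rho_0,\mu) > 0$ because $W_2$ is a metric, simply fills in a detail the paper leaves implicit.
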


Note that the above proposition does not apply to controllers with order $\geq 1$, since it is only the pointwise nature of \eqref{pointwise_control_eq} which is in conflict. In other words, dependence of $\K$ on derivatives of $\rho$, $\mu$ allows for nontrivial rotationally-symmetric controllers (e.g., \eqref{gradient_controller_eq}).

A similar result also holds for controllers that are symmetric under spatial translations, but takes a bit more work. We first consider the \emph{linearized error dynamics}.

\begin{lem}[Linearized Error Dynamics] \label{linearized_dynamics_lem}
    Consider the dynamics \eqref{continuity_eqn} evolving under feedback with the control \eqref{pointwise_control_eq}. Suppose that $\rho = \mu$ is a fixed point of the closed loop, and define the \emph{error} by $e_t := \rho_t - \mu$. Then the dynamics of the error, linearized around the point $e=0$, are given by
    \begin{equation} \label{linearized_dynamics_eq}
        \partial_t e_t ~=~ -\nabla \cdot (e_t \, b)
    \end{equation}
    for a vector field
    \begin{equation} \label{vector_field_b_eq}
        b(x) ~:=~ k(x,\mu(x),\mu(x)) + \mu(x) \, \partial_\rho k(x,\mu(x),\mu(x)) ,
    \end{equation}
    where $\partial_\rho k$ denotes the partial derivative of $k$ with respect to the second argument.
\end{lem}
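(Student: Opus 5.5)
The plan is a direct first-order expansion of the closed-loop dynamics around $\rho=\mu$. Substituting $\rho_t=\mu+e_t$ into \eqref{continuity_eqn} with the control \eqref{pointwise_control_eq} gives, since $\mu$ does not depend on time,
\begin{equation*}
    \partial_t e_t ~=~ -\nabla\cdot\Big( (\mu+e_t)\, k\big(x,\mu(x)+e_t(x),\mu(x)\big)\Big).
\end{equation*}
The key feature to exploit is that $k$ is pointwise. The nonlinearity $F(e)(x):=(\mu(x)+e(x))\,k(x,\mu(x)+e(x),\mu(x))$ is therefore a superposition (Nemytskii) operator, so its Fr\'echet/G\^ateaux derivative at $e=0$ can be computed pointwise by ordinary one-variable calculus in the scalar variable $r=\rho(x)$.

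First, for fixed $x$, differentiate the map $r\mapsto r\,k(x,r,\mu(x))$ at $r=\mu(x)$ by the product rule. This gives the expansion
\begin{equation*}
    F(e)(x) ~=~ \mu(x)k(x,\mu(x),\mu(x)) + e(x)\big[k(x,\mu(x),\mu(x))+\mu(x)\,\partial_\rho k(x,\mu(x),\mu(x))\big] + o(e(x)).
\end{equation*}
The bracketed coefficient is exactly $b(x)$ from \eqref{vector_field_b_eq}. Second, use the fixed-point hypothesis: $\rho=\mu$ being an equilibrium means that the zeroth-order term satisfies $\nabla\cdot(\mu\,k(\cdot,\mu,\mu))=0$. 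This term therefore drops out of the error equation. Third, apply the linear operator $-\nabla\cdot$, which commutes with taking the linear part, and discard the $o(e)$ terms. This yields \eqref{linearized_dynamics_eq}.

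The only point needing care is justifying that the linearization of the composite operator is the divergence of the pointwise linearization. This requires the remainder to be $o(e)$ in a norm that is preserved under $\nabla\cdot$, i.e.\ one that controls derivatives of $e$. I would handle this either formally, by computing the G\^ateaux derivative $\frac{d}{d\epsilon}\big|_{\epsilon=0}$ of the right-hand side along $\mu+\epsilon h$ and differentiating under the divergence, or rigorously, by assuming $k$ is $C^2$ and working in $C^1$. In the latter setting Taylor's theorem with uniform bounds on $\Omega$ (compact) controls both the remainder and its gradient, via the chain rule applied to $\nabla[(\mu+e)k(x,\mu+e,\mu)]$. The proof should also note that the zero-flux boundary condition is inherited: the linearized flux $e\,b$ satisfies $\hat n\cdot b=0$ on $\partial\Omega$ whenever $\hat n\cdot k=0$ there for all arguments.
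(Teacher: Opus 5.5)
Your proposal is correct and follows essentially the paper's argument: substitute $\rho_t=\mu+e_t$, expand the flux $(\mu+e)\,k(x,\mu+e,\mu)$ to first order in $e$, cancel the zeroth-order term using $\nabla\cdot(\mu\,k(\cdot,\mu,\mu))=0$, and read off $b$. Your product-rule derivative of $r\mapsto r\,k(x,r,\mu(x))$ is the same as the paper's expansion of $\K$ followed by multiplication by $\mu+e$; the paper differs only in working in the weak form and simply assuming the remainder is $o(\|e\|_{H^{-1}})$ (its appendix then goes further, relating stability of the nonlinear and linearized systems), whereas you justify the remainder with a $C^2$ Taylor estimate in $C^1$, which suffices for the statement as posed.
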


\begin{sketch}
    Rewrite the dynamics in terms of the error
    \begin{equation} \label{full_error_dynamics_eq}
        \partial_t e_t ~=~ - \nabla \cdot \big( (e_t + \mu) \, \K(e_t+\mu ,\mu) \big)
    \end{equation}
    and expand $\K$ around the setpoint $\mu$
    \begin{equation}
        \K(e_t+\mu,\mu) ~=~ \K(\mu,\mu) + e_t \partial_\rho \K(\mu,\mu) + o(\|e_t\|) .
    \end{equation}
    Using the fact that $\mu$ is a fixed point and taking $\| e \|$ to be small, we obtain the linear approximation \eqref{linearized_dynamics_eq}-\eqref{vector_field_b_eq}. See Appendix \ref{linearized_dynamics_appendix} for a rigorous treatment.
\end{sketch}

The linearized error dynamics take exactly the form of the continuity equation, describing the transport of $e$ by the vector field $b$. Importantly, $b$ is independent of $e$.

\begin{prop} \label{translational_controller_prop}
    No controller of the form \eqref{pointwise_control_eq} which is symmetric under spatial translations solves Problem \ref{distributed_regulation_prob} with $W_2$ uniform local exponential stability.
\end{prop}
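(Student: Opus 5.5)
Translation symmetry (Appendix~\ref{symmetries_appendix}) forces a controller of the form \eqref{pointwise_control_eq} to have no explicit dependence on $x$, so $\K(\rho,\mu)(x) = k(\rho(x),\mu(x))$. We argue by contradiction: suppose such a $k$ solves Problem~\ref{distributed_regulation_prob} with $W_2$ ULES for every setpoint $\mu$. Then the linearization of Lemma~\ref{linearized_dynamics_lem} should also be exponentially stable, and we will exhibit a setpoint for which it is not.

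\textbf{Step 1: a degenerate setpoint.} Since $\mu$ is arbitrary, take $\mu$ constant on a ball $B \subset \operatorname{int}\Omega$, say $\mu \equiv c$ there. Every setpoint must be a fixed point, so
\[
    \nabla\cdot\big(\mu\, k(\mu,\mu)\big) = 0 .
\]
On $B$ the vector field
\[
    b(x) = k(c,c) + c\,\partial_\rho k(c,c) =: b_0
\]
is a \emph{constant} vector. Hence, for perturbations $e_0$ supported in $B$ with zero mass, the linearized dynamics \eqref{linearized_dynamics_eq} reduce on $B$ to pure translation: $e_t(x) = e_0(x - t b_0)$, for as long as the support stays in $B$. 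If $b_0 = 0$, then $e_t = e_0$ for all time, which contradicts any exponential decay. If $b_0 \neq 0$, the perturbation is rigidly translated. A rigid translation preserves the $W_2$ distance between the positive and negative parts of $e$, so up to the time the support reaches $\partial B$ the quantity $W_2(\mu + e_t, \mu)$ is essentially constant. That alone is only a finite-time statement, so something further is needed.

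\textbf{Step 2: upgrading the finite-time obstruction.} To get a genuine contradiction, choose $\mu$ constant on a larger region, e.g.\ $\mu$ uniform on all of $\Omega$. Then $b \equiv b_0$ everywhere. The zero-flux boundary condition $\hat n\cdot(\mu\, k)=0$, together with constancy of $k(c,c)$, forces $k(c,c)=0$ whenever $\Omega$ has full-dimensional normals, i.e.\ its boundary normals span $\R^n$. The same zero-flux condition on the linearized flux $e\,b_0$ should then force $b_0 = 0$; alternatively, a nonzero constant drift simply piles mass against the boundary and is not a decay mechanism. Either way the linearized operator $e \mapsto -\nabla\cdot(e\, b_0)$ has a nontrivial kernel of stationary mean-zero perturbations, or no decay in $W_2$. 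This rules out an exponential rate $C e^{-\lambda t}$ for the linearization.

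\textbf{Step 3: transferring back to the nonlinear flow (the main obstacle).} We must pass from failure of exponential decay of the linearized flow to failure of $W_2$ ULES of the nonlinear flow. I plan a direct perturbative estimate. Take $e_0 = \varepsilon \eta$ with $\eta$ smooth, of mass zero and compactly supported. Compare the nonlinear solution of \eqref{full_error_dynamics_eq} with the linearized solution over a fixed time horizon $T$ chosen so that $C e^{-\lambda T} < 1/2$. The $O(\varepsilon^2)$ remainder in the expansion from Lemma~\ref{linearized_dynamics_lem} should give
\[
    W_2(\rho_T,\mu) \geq W_2(\mu + \varepsilon\eta_T, \mu) - O(\varepsilon^2) ,
\]
while the first term is of order $\varepsilon$ and comparable to $W_2(\rho_0,\mu)$. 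Taking $\varepsilon$ small contradicts the ULES bound at time $T$. The delicate point is controlling the nonlinear remainder in $W_2$: $W_2$ is not a norm on perturbations, and $W_2(\mu+\varepsilon\eta,\mu)$ need not scale linearly in $\varepsilon$. I would handle this by using the local equivalence of $W_2$ near a uniform density with the dual Sobolev norm $\dot H^{-1}$, and doing all estimates in $\dot H^{-1}$.
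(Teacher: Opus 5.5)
Your argument is the paper's: translation symmetry removes the explicit $x$-dependence, a constant setpoint makes $b$ constant, the no-flux condition forces $b=0$ so the linearization is stationary and cannot be exponentially stable, and the transfer back to nonlinear $W_2$-ULES runs through the $W_2$/$\dot H^{-1}$ equivalence plus a remainder bound. The paper packages that last step as its rigorous linearization lemma under the standing assumption $\|R(e)\|_{H^{-1}}=o(\|e\|_{H^{-1}})$, and your Step 3 is a hands-on version of the one direction of that lemma you actually need.

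Step 1 and the ``piles mass against the boundary'' alternative can be dropped. Note, though, that $k(c,c)=0$ alone does not remove the term $c\,\partial_\rho k(c,c)$ from $b_0$. The clean route is to differentiate the no-flux identity $\hat n\cdot r\,k(r,c)=0$, valid for every admissible boundary value $r$, at $r=c$; this gives $\hat n\cdot b_0=0$ for every outward normal of the convex body $\Omega$, hence $b_0=0$. The undifferentiated identity even gives $k(\cdot,c)\equiv 0$ on the admissible range, so for this $\mu$ the nonlinear closed loop is itself stationary and Step 3 is not needed.
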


\begin{sketch}
We may construct a counterexample as follows. Suppose that $\K$ is symmetric under spatial translations so that it does not depend explicitly on $x$, and take $\mu$ to be constant. Then $b$ is also constant. By the zero-flux boundary condition\footnote{Under more general boundary conditions, $e$ may evolve according to pure translation: $e_t(x) = e_0(x-tb)$, but the conclusion is the same.}, $b$ must be zero. Then $e$ must be constant, the linearization is unstable, and the nonlinear dynamics cannot be ULES. See Appendix \ref{symmetries_appendix} for a rigorous treatment.
\end{sketch}

Thus, for a controller of the form \eqref{pointwise_control_eq} to solve Problem \ref{distributed_regulation_prob}, it must depend explicitly on $x$.

\subsection{Conserved Quantities}

Taking a closer look at the linearized error dynamics \eqref{linearized_dynamics_eq}, we may observe a fundamental tension: in order for a controller to stabilize the fixed point $\rho = \mu$, it needs to be able to dissipate errors. However, the continuity equation is a fundamentally conservative, transport-type equation.

\begin{prop} \label{l1_error_prop}
    The linearized error dynamics \eqref{linearized_dynamics_eq} conserve the $L^1$ norm of the error: $\| e_t \|_{L^1} =$ constant.
\end{prop}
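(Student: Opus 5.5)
The plan is to show directly that $\frac{d}{dt}\|e_t\|_{L^1}=0$ for sufficiently regular solutions of \eqref{linearized_dynamics_eq}. The cleanest route is to observe that \eqref{linearized_dynamics_eq} is a linear continuity equation with a time-independent velocity field $b$. By the zero-flux boundary condition, $\hat n\cdot b=0$ on $\partial\Omega$, so $\Omega$ is invariant under the flow. Assume $b$ is Lipschitz (say $C^1$), which holds when $k$ and $\mu$ are $C^1$ by \eqref{vector_field_b_eq}. Then $b$ generates a flow $\Phi_t:\Omega\to\Omega$ of diffeomorphisms, and the solution is the pushforward $e_t=(\Phi_t)_*e_0$. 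Explicitly,
\begin{equation*}
e_t(x)=e_0(\Phi_t^{-1}(x))\,\bigl|\det D\Phi_t^{-1}(x)\bigr|.
\end{equation*}

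I would first verify this representation via the method of characteristics. Along a trajectory $\dot X=b(X)$, the quantity $e_t(X(t))\,J(t)$ is constant, where $J$ is the Jacobian determinant satisfying $\dot J=(\nabla\cdot b)J$. This follows by expanding $\nabla\cdot(eb)=b\cdot\nabla e+e\,\nabla\cdot b$. Uniqueness of solutions of the linear transport equation then identifies the PDE solution with this formula.

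The $L^1$ norm is then computed by the change of variables $x=\Phi_t(y)$:
\begin{equation*}
\int_\Omega |e_t(x)|\,dx=\int_\Omega |e_0(\Phi_t^{-1}(x))|\,\bigl|\det D\Phi_t^{-1}(x)\bigr|\,dx=\int_\Omega |e_0(y)|\,dy.
\end{equation*}
This uses that $\Phi_t$ is a bijection of $\Omega$. The key point is that the absolute value commutes with the pushforward, because the Jacobian factor is positive. Positive and negative parts of $e$ are therefore transported separately and never cancel.

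As an alternative (or cross-check), there is a direct energy-type argument. Multiply \eqref{linearized_dynamics_eq} by a smooth approximation $\eta_\varepsilon(e)$ of $\operatorname{sgn}(e)$, with $\eta_\varepsilon=\phi_\varepsilon'$ and $\phi_\varepsilon$ a convex approximation of $|\cdot|$. Integrating by parts using $\hat n\cdot b=0$ gives
\begin{equation*}
\frac{d}{dt}\int\phi_\varepsilon(e)=\int \eta_\varepsilon'(e)\,e\,\nabla e\cdot b=\int b\cdot\nabla\psi_\varepsilon(e)=-\int(\nabla\cdot b)\,\psi_\varepsilon(e),
\end{equation*}
with $\psi_\varepsilon(s)=\int_0^s r\,\eta_\varepsilon'(r)\,dr$. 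Since $|\psi_\varepsilon|\le \varepsilon$-small uniformly, this tends to zero as $\varepsilon\to0$.

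The main obstacle is regularity. The representation needs $b$ Lipschitz and tangent to $\partial\Omega$, and the $\varepsilon$-argument needs enough smoothness of $e$ to integrate by parts. I would state these as standing assumptions, consistent with the paper's ``sufficiently nice'' data, rather than treat rough $b$ (where DiPerna--Lions theory would be needed).
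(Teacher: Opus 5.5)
Your proposal is correct and follows essentially the same route as the paper: both represent the solution along the flow of $b$ as $e_0$ divided by the strictly positive Jacobian $J(t,x)=\exp\bigl(\int_0^t(\nabla\cdot b)(\phi_s(x))\,ds\bigr)$, then change variables $y=\phi_t(x)$, so that positive and negative parts are transported without cancellation. The paper phrases the regularity through a bi-Lipschitz flow and Rademacher's theorem, while you assume $b\in C^1$ and add a valid renormalization cross-check; one small slip is that, since $b$ contains $\partial_\rho k$, making $b$ Lipschitz or $C^1$ requires $\partial_\rho k$ itself to be Lipschitz (e.g.\ $k\in C^2$), not merely $k,\mu\in C^1$.
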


\begin{cor} \label{lp_instability_cor}
    No controller of the form \eqref{pointwise_control_eq} can solve Problem \ref{distributed_regulation_prob} with $L^p$ uniform local exponential stability.
\end{cor}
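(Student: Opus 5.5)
The plan is to argue by contradiction, combining Lemma \ref{linearized_dynamics_lem} with Proposition \ref{l1_error_prop}. Suppose some pointwise controller $\K$ of the form \eqref{pointwise_control_eq} solves Problem \ref{distributed_regulation_prob} and renders a setpoint $\mu$ ULES in $L^p$. Then there are constants $\varepsilon,C,\lambda$ with $\|e_t\|_{L^p}\le Ce^{-\lambda t}\|e_0\|_{L^p}$ whenever $\|e_0\|_{L^p}<\varepsilon$.

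The first step is to pass this exponential bound from the nonlinear error dynamics \eqref{full_error_dynamics_eq} to the linearized dynamics \eqref{linearized_dynamics_eq}. I would use the standard principle that uniform local exponential stability of a fixed point, for a sufficiently regular (Fréchet differentiable) vector field, forces the linearized semigroup to be exponentially stable. Concretely, take an initial error $s\,e_0$ with $s\to 0$. The nonlinear solution is $e_t^{(s)} = s\,\tilde e_t + o(s)$ on each finite interval $[0,T]$, where $\tilde e_t$ solves \eqref{linearized_dynamics_eq}. Dividing by $s$ and letting $s\to0$ gives $\|\tilde e_t\|_{L^p}\le Ce^{-\lambda t}\|e_0\|_{L^p}$ for every $t$. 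I expect this to be the main obstacle, because the remainder must be controlled in $L^p$ uniformly on compact time intervals. That requires well-posedness and differentiable dependence of the nonlinear flow. I would first try a Gronwall estimate on the difference $e^{(s)}_t - s\tilde e_t$, using the smoothness of $k$ and the fact that $\rho$ stays bounded near $\mu$. Failing that, I would simply assume enough regularity that this linearization principle holds, as the paper's sketches tacitly do.

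The second step invokes Proposition \ref{l1_error_prop}: $\|\tilde e_t\|_{L^1}=\|e_0\|_{L^1}$ for all $t$. On the compact domain $\Omega$, Hölder's inequality gives $\|\tilde e_t\|_{L^1}\le \operatorname{vol}(\Omega)^{1-1/p}\|\tilde e_t\|_{L^p}$. So for any nonzero admissible $e_0$, i.e. smooth, mean zero, and with $\mu+e_0\ge0$,
\[
0<\|e_0\|_{L^1}=\|\tilde e_t\|_{L^1}\le \operatorname{vol}(\Omega)^{1-1/p}\,C e^{-\lambda t}\|e_0\|_{L^p}\to 0,
\]
which is a contradiction. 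For $p=\infty$ the same bound holds with exponent $1$.

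Finally, admissible perturbations exist for any reasonable $\mu$. One can take a small smooth mean-zero bump supported where $\mu>0$, scaled below $\varepsilon$. Alternatively, since only the linearization matters, any mean-zero $e_0$ will do. This gives the corollary.
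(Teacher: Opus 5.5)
Your proposal is correct and follows the paper's proof. Both arguments combine the $L^1$ conservation of Proposition \ref{l1_error_prop} with the H\"older bound $\|f\|_{L^1}\le \operatorname{vol}(\Omega)^{1-1/p}\|f\|_{L^p}$ to rule out $L^p$ decay of the linearized error, and then transfer this to the nonlinear dynamics. The one difference is that you spell out the transfer step with the scaling argument $e^{(s)}_t = s\tilde e_t + o(s)$, whereas the paper simply asserts that if the linearization is not $L^p$-ULES then neither is the nonlinear system, relying on the same regularity you flag.
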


See Appendix \ref{linearized_dynamics_appendix} for proofs of these results. Note that while this does not show that a pointwise controller cannot solve Problem \ref{distributed_regulation_prob}, it does show that such stabilization can occur only in a weak (i.e.~Wasserstein) sense.

\subsection{Mixing Dynamics}

The results of the previous sections show us that pointwise controllers can only ever transport error -- never dissipate it -- and thus the error can never approach zero in any strong/pointwise sense. However, the error may approach zero in a weak sense if regions of positive and negative error can be brought close together in a small-scale, high-frequency manner. This suggests \emph{mixing} as a potential mechanism by which pointwise controllers may solve Problem \ref{distributed_regulation_prob}, and indeed, we have the following results.

\begin{thm} \label{mixing_thm}
Zero is a $W_2$ globally attractive fixed point of the linearized dynamics
\begin{equation}
    \partial_t e_t ~=~ -\nabla \cdot (e_t \, b)
\end{equation}
if and only if the vector field $b$ is mixing with respect to a sufficiently regular invariant measure.
\end{thm}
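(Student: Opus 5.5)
The plan is to solve the linearized dynamics explicitly by characteristics, and then to read both directions of the equivalence off the resulting formula. Since $b$ does not depend on $e$, the equation $\partial_t e_t = -\nabla\cdot(e_t\,b)$ is a linear continuity equation. Let $\Phi_t$ be the flow of $b$; by the zero-flux condition it maps $\Omega$ to itself. Then the solution is the pushforward $e_t = (\Phi_t)_* e_0$, understood for a signed density. Two further facts will be used:
\begin{itemize}
\item $e_0 = \rho_0-\mu$ has zero mean, and $\Omega$ is compact.
\item On compact $\Omega$, $W_2$ metrizes weak convergence of probability measures. Hence $W_2(\mu+e_t,\mu)\to 0$ is equivalent to $\int g\, e_t \to 0$ for every $g\in C(\Omega)$, and by density it suffices to test smooth $g$.
\end{itemize}
The first step is therefore to rewrite global attractivity as
\[
\int_\Omega (g\circ\Phi_t)\, e_0 \, dx ~\to~ 0 \qquad \text{for all } g\in C(\Omega)
\]
and all admissible mean-zero $e_0$. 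I will state explicitly that "admissible" means that $\mu+e_0$ is a probability density.

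\emph{Sufficiency.} Suppose $\nu = m\,dx$ is an invariant measure for $\Phi_t$ and $b$ is mixing with respect to it. "Sufficiently regular" will mean that $m$ is bounded below on the support of the relevant densities, so that $f := e_0/m$ belongs to $L^2(\nu)$. Invariance of $\nu$ gives $(\Phi_t)_*(f\,m) = (f\circ\Phi_{-t})\,m$. Then
\[
\int g\, e_t \, dx ~=~ \int g\,(f\circ\Phi_{-t})\, d\nu ~\to~ \Big(\int g\, d\nu\Big)\Big(\int f\, d\nu\Big) ~=~ 0 ,
\]
where the limit is the definition of mixing (decay of correlations) and the final equality holds because $\int f\,d\nu = \int e_0\,dx = 0$. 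Weak convergence, and hence $W_2$ convergence, follows.

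\emph{Necessity.} First construct the invariant measure. For any probability density $\rho_0$, form the Krylov--Bogoliubov time averages
\[
\frac{1}{T}\int_0^T (\Phi_t)_*\rho_0 \, dt .
\]
These have a weak limit point $\nu$, which is invariant. Attractivity applied to $\rho_0-\rho_0'$ shows that $(\Phi_t)_*\rho_0 - (\Phi_t)_*\rho_0' \to 0$ weakly for any two admissible densities. Taking $\rho_0$ itself invariant if one exists, or otherwise using Cesàro averages, this shows that $(\Phi_t)_*\rho_0\to\nu$ weakly for every $\rho_0$, and that $\nu$ is unique. Moreover, $\mu$ itself is a natural candidate for $\nu$, since the condition that $\mu$ is a fixed point makes the relevant flux vanish. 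Given $\nu = m\,dx$ with $m$ regular, take any $f\in L^\infty(\nu)$ with $\int f\,d\nu = 0$. For small $\epsilon$, $\epsilon f m$ is an admissible perturbation. Running the sufficiency computation backwards, attractivity yields $\int g\,(f\circ\Phi_{-t})\,d\nu\to 0$ for all $g$. By density, this is mixing in $L^2(\nu)$.

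The main obstacle is the regularity of $\nu$ in the necessity direction. Krylov--Bogoliubov produces only a measure, which might be singular. If it is singular, perturbations of the form $f m$ are not admissible densities, and the equivalence must be phrased with the qualifier "sufficiently regular". I would handle this honestly. I would take the theorem's hypothesis to mean the class in which $\nu\ll dx$ with density bounded above and below on $\Omega$ (or on $\operatorname{supp}\mu$). I would then check that under this assumption the two-sided argument closes. In particular, I would verify that attractivity for all admissible $e_0$ is enough to test every mean-zero $f\in L^\infty(\nu)$, which requires scaling, since the problem is linear.
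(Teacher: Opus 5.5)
Your argument is essentially the paper's. Sufficiency is its part (ii): write $e_0$ as a mean-zero density against the invariant measure and apply mixing. Necessity is its part (i): obtain an invariant measure by compactness (Prokhorov on $\nu_t$ in the paper, Krylov--Bogoliubov averages in yours), assume it is absolutely continuous, and apply attractivity to mean-zero perturbations $f\,m$. Like the paper, you impose the regularity of the invariant measure as a hypothesis rather than deriving it.

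One correction: your aside that $\mu$ is a natural candidate for the invariant measure is false in general. The fixed-point condition only gives $\nabla\cdot(\mu\,k(\cdot,\mu,\mu))=0$, whereas invariance of $\mu$ under the flow of $b$ requires $\nabla\cdot(\mu b)=\nabla\cdot\bigl(\mu^2\,\partial_\rho k(\cdot,\mu,\mu)\bigr)=0$, so $\mu$ cannot be used to supply the regular invariant measure.
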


\begin{defn}[Mixing]
    A vector field $b$ on a compact domain $\Omega$ is said to be \emph{mixing} if there exists an invariant measure $\pi$ on $\Omega$: $(\phi_t)_\# \pi = \pi$, such that for any two continuous functions $f,g$ on $\Omega$,
    \begin{equation} \label{mixing_eq}
        \lb f , g \circ \phi_t^{-1} \rb_{L^2(\pi)} ~\to~ \lb 1 , f \rb_{L^2(\pi)} \, \lb 1 , g \rb_{L^2(\pi)} ,
    \end{equation}
    where $\phi$ denotes the \emph{flow} of $b$:
    \begin{equation} \label{flow_map_eq}
        \partial_t \phi_t(x) ~=~ b(\phi_t(x)), \quad \phi_0(x) = x ,
    \end{equation}
    and $(\cdot)_\#$ the \emph{measure pushforward}:
    \begin{equation} \label{pushforward_eq}
        \int_A (\phi_t)_\# \pi  ~=~ \int_{\phi_t^{-1}(A)} \pi  \qquad \forall A \subset \Omega ~~ \text{meas} .
    \end{equation}
\end{defn}

\begin{sketch}
    The convergence of $\rho_t \to \mu$ in the Wasserstein sense (equivalently, $e_t \to 0$) is equivalent to weak convergence in duality with continuous functions: $\lb \psi , \rho_t \rb \to \lb \psi , \mu \rb$ (equivalently, $\lb \psi , e_t \rb \to 0$) for all continuous functions $\psi$. Identifying $\psi$ with $f$ and $e$ with $g$ (and accounting for the invariant measure $\pi$), the mixing property \eqref{mixing_eq} implies that $e=0$ is attractive since $e$ is zero mean: $\lb 1 , e_t \rb = 0$. The converse follows by breaking $g$ into its average and zero-mean components. See Appendix \ref{mixing_appendix} for a rigorous treatment.
\end{sketch}

\begin{cor} \label{exponential_mixing_cor}
    For a controller of the form \eqref{pointwise_control_eq} to solve Problem \ref{distributed_regulation_prob} with $W_2$ uniform local exponential stability, the vector field \eqref{vector_field_b_eq} must be mixing with respect to a sufficiently regular invariant measure.
\end{cor}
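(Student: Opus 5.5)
The plan is to pass from the nonlinear closed loop to its linearization, and then to invoke Theorem \ref{mixing_thm}. Suppose a controller $\K$ of the form \eqref{pointwise_control_eq} solves Problem \ref{distributed_regulation_prob} with $W_2$ uniform local exponential stability. In particular $\rho=\mu$ is a fixed point of the closed loop, so Lemma \ref{linearized_dynamics_lem} applies. The error, linearized about $e=0$, then evolves by the transport equation \eqref{linearized_dynamics_eq} with the fixed vector field $b$ given by \eqref{vector_field_b_eq}. By Theorem \ref{mixing_thm}, it suffices to show that $e=0$ is $W_2$ globally attractive for \eqref{linearized_dynamics_eq}. That theorem then yields that $b$ is mixing with respect to a sufficiently regular invariant measure, which is the claim.

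The key step is to transfer exponential stability from the nonlinear system to the linearized one. I would do this by the standard principle, already invoked after the definition of uniform exponential stability, that ULES of a nonlinear system is detected by, and in particular passes to, its linearization. Concretely, take a zero-mean perturbation $\eta$ (zero mean keeps $\rho$ a probability density) and initial data $\rho_0=\mu+s\eta$ with $s$ small. The Gateaux derivative $\partial_s \rho_t|_{s=0}=:\dot e_t$ solves \eqref{linearized_dynamics_eq} with $\dot e_0=\eta$. Since $W_2(\rho_t,\mu)\le Ce^{-\lambda t}W_2(\rho_0,\mu)$, we have $W_2(\rho_t,\mu)=O(s)$ uniformly in $t$, with an exponentially decaying factor. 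Pairing with a Lipschitz test function $\psi$ and dividing by $s$ gives
\[
|\lb \psi, e_t\rb|\le \operatorname{Lip}(\psi)\, W_1(\rho_t,\mu)\le \operatorname{Lip}(\psi)\,W_2(\rho_t,\mu).
\]
Taking $s\to0$ then yields $|\lb\psi,\dot e_t\rb|\le \operatorname{Lip}(\psi)\,Ce^{-\lambda t}\,W_2(\mu+s\eta,\mu)/s$. The last quotient stays bounded as $s\to 0$: it is controlled by a negative Sobolev ($\dot H^{-1}$-type) norm of $\eta$, using $W_2(\mu+s\eta,\mu)\lesssim s\|\eta\|_{\dot H^{-1}(\mu)}$ when $\mu$ is bounded below. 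Hence $\lb\psi,\dot e_t\rb\to0$ for all Lipschitz $\psi$, and by density for all continuous $\psi$. Since the linear flow is a pushforward by $\phi_t$, it preserves total variation (Proposition \ref{l1_error_prop}), so the family $\dot e_t$ is bounded and this density argument is legitimate. This is exactly weak convergence, i.e.\ $W_2$ attractivity of zero for \eqref{linearized_dynamics_eq}, and extending to general (signed, zero-mean) initial errors again uses linearity and density.

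I expect the main obstacle to be this linearization step: justifying differentiability of the solution map $s\mapsto\rho_t$ uniformly in $t$, and interchanging the limit $s\to0$ with the $W_2$ bound. The regularity of $\mu$, $k$ and $\eta$ must be enough for the remainder in Lemma \ref{linearized_dynamics_lem} to be genuinely higher order along trajectories. I would handle this by writing $\rho_t=(\Phi_t^s)_\#\rho_0$ via the characteristic flow of the closed-loop velocity, and differentiating the flow in $s$ with Gronwall estimates on compact time intervals. Uniformity in $t$ is only needed at the level of the inequality above, which is taken pointwise in $t$ before letting $t\to\infty$. If a full justification proves too delicate, I would state the corollary under the standing assumption (implicit in the paper's remark on linearization) that ULES of the closed loop implies ULES, hence attractivity, of the linearization. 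The corollary then follows directly from Theorem \ref{mixing_thm}.
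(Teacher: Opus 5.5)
Your proposal is correct and follows the paper's route. The paper obtains the corollary by combining the appendix version of Lemma \ref{linearized_dynamics_lem} with Theorem \ref{mixing_thm}: that lemma says nonlinear $W_2$-ULES forces exponential decay of the linearization, justified by differentiating the nonlinear stability estimate at $\mu$ and comparing $W_2$ with $H^{-1}$. Your first-variation argument is a hands-on version of that transfer in the weak topology, using Kantorovich--Rubinstein duality and a density step via $L^1$ conservation, which the paper leaves implicit. As with the paper, the appendix form of Theorem \ref{mixing_thm} yields mixing only when the unique limiting invariant measure $\pi$ is absolutely continuous (part (i)), so ``sufficiently regular'' must be read as that extra hypothesis.
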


See Appendix \ref{mixing_appendix} for proofs of these results. We emphasize that this limits the existence of pointwise solutions to Problem \ref{distributed_regulation_prob} in low dimensions in particular. There are no mixing vector fields in one dimension, and there are strong topological constraints on mixing vector fields in two dimensions (due to the Poincare-Bendixon Theorem, for example). In dimensions three and higher, mixing vector fields exist in abundance, but can be difficult to construct.

Lastly, we emphasize that the results stated in this section apply only to \emph{deterministic} pointwise controllers. In particular, \emph{random} pointwise controllers may solve Problem \ref{distributed_regulation_prob} by introducing a Laplacian term into the dynamics so that errors may be dissipated (see \cite{bandyopadhyay2017probabilistic}, for instance). The study of these sorts of controllers is left to future work.

\section{Conclusion} \label{conclusion_sec}

In this paper, we formulated a problem of distributed regulation for continuum swarms and proposed a model of distributed controllers as differential operators. Using this model, we investigated fundamental performance limitations of low-order distributed controllers, showing that static, pointwise controllers which solve the proposed problem must be based on mixing. Such controllers either do not exist (in sufficiently low dimensions) or are difficult to construct. Furthermore, even when such controllers do exist, they do not possess desirable symmetry or strong stability properties. This motivates the investigation of more general distributed control architectures which involve either a) higher controller order, or b) dynamic feedback (as outlined in Section \ref{control_communication_sec}).


\appendix

\makeatletter
\@addtoreset{thm}{subsection}
\makeatother
\renewcommand{\thethm}{\thesubsection\arabic{thm}}

In this appendix, we provide rigorous statements and proofs for all results. Restatements of results appearing in the main text appear with the same number and in parentheses, e.g., (\textbf{Proposition 8}) for the restatement of Proposition 8. Results which appear only in the appendix are numbered sequentially starting with the appendix letter and without parentheses, e.g., \textbf{Lemma A1} for the first technical lemma appearing in Appendix A.

\subsection{Modes of Convergence} \label{technical_preliminaries_appendix}

Here, we list some results which characterize the relationship between convergence in $W_2$ and other modes of convergence, including convergence in $L^p$, $H^{-1}$ (Sobolev order minus-1), and weak convergence (i.e., in duality with continuous functions).

\begin{lem}
    Let $\Omega \subset \R^d$ be bounded and measurable and $\rho$, $\mu$ be absolutely continuous probability measures on $\Omega$. (By abuse of notation, denote also by $\rho$, $\mu$ their density functions.) Then it holds that
    \begin{equation}
        W_2^2(\rho,\mu) ~\leq~ \frac{1}{2} \, \operatorname{diam}(\Omega)^2 \, \operatorname{vol}(\Omega)^{1-\frac{1}{p}} \, \|\rho-\mu\|_{L^p} ,
    \end{equation}
    where $\operatorname{diam}$ denotes the \emph{diameter} of $\Omega$ and $\operatorname{vol}$ its volume.
\end{lem}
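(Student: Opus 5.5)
The plan is to build an explicit transport plan that leaves the common mass $\min(\rho,\mu)$ in place and moves only the excess mass, then bound its cost crudely by $\operatorname{diam}(\Omega)^2$.

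Write $\rho\wedge\mu := \min(\rho,\mu)$, $(\rho-\mu)_+$ and $(\rho-\mu)_-$ for the positive and negative parts. Since both densities have mass one, the two parts have the same mass
\[
m \;:=\; \int_\Omega(\rho-\mu)_+\,dx \;=\; \int_\Omega(\rho-\mu)_-\,dx \;=\; \tfrac12\|\rho-\mu\|_{L^1}.
\]

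\textbf{Step 1 (the plan).} Define
\[
\gamma(x,y) \;=\; (\rho\wedge\mu)(x)\,\delta(x-y) \;+\; \frac{1}{m}\,(\rho-\mu)_+(x)\,(\rho-\mu)_-(y),
\]
with $\gamma$ equal to the first term alone if $m=0$. Both terms are nonnegative, so $\gamma\ge 0$, and it is to be read as a measure in the distributional sense of \eqref{wass_dist_def}. For the first marginal,
\[
\int\gamma\,dy \;=\; \rho\wedge\mu + (\rho-\mu)_+ \;=\; \rho,
\]
and for the second, $\int\gamma\,dx = \rho\wedge\mu + (\rho-\mu)_- = \mu$. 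Hence $\gamma\in\Gamma(\rho,\mu)$.

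\textbf{Step 2 (the cost).} The diagonal part costs nothing, because $\|x-y\|^2=0$ there. On the off-diagonal part, $\|x-y\|_2^2\le\operatorname{diam}(\Omega)^2$, and that part has total mass $m$. Therefore
\[
W_2^2(\rho,\mu)\;\le\;\operatorname{diam}(\Omega)^2\, m \;=\; \tfrac12\operatorname{diam}(\Omega)^2\,\|\rho-\mu\|_{L^1}.
\]

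\textbf{Step 3 (from $L^1$ to $L^p$).} Apply Hölder's inequality on the bounded set $\Omega$:
\[
\|\rho-\mu\|_{L^1}\;\le\;\operatorname{vol}(\Omega)^{1-1/p}\,\|\rho-\mu\|_{L^p}.
\]
Substituting this into Step 2 gives the claim for every $p\in[1,\infty]$, with the convention $1/\infty=0$.

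\textbf{Main obstacle.} The estimates themselves are routine. The only care needed is in writing $\gamma$ rigorously, since it is singular on the diagonal. I would define it as the pushforward of $(\rho\wedge\mu)\,dx$ under $x\mapsto(x,x)$, plus the product measure $m^{-1}(\rho-\mu)_+\otimes(\rho-\mu)_-$, and verify the marginals by testing against product test functions.
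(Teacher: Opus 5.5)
Your proposal is correct and follows the same route as the paper. The paper bounds $W_2^2(\rho,\mu)\le\operatorname{diam}(\Omega)^2\|\rho-\mu\|_{\mathrm{TV}}$, writes $\|\rho-\mu\|_{\mathrm{TV}}=\tfrac12\|\rho-\mu\|_{L^1}$, and then applies Hölder exactly as in your Step 3; the only difference is that the paper asserts the total-variation bound without proof, whereas your coupling (diagonal on $\rho\wedge\mu$, product coupling of the excess with the deficit) actually supplies it.
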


\begin{proof}
    Supposing that $\Omega$ is bounded, we have
    \begin{equation}
        W_2^2(\rho,\mu) ~\leq~ \operatorname{diam}(\Omega)^2 \, \| \rho - \mu \|_\text{TV} ,
    \end{equation}
    where $\| \cdot \|_\text{TV}$ denotes the total variation norm. Supposing further that $\rho$ and $\mu$ are absolutely continuous, the Holder inequality gives
    \begin{equation}
        \| \rho - \mu \|_{TV} ~=~ \frac{1}{2} \, \| \rho - \mu \|_{L^1} ~\leq~ \frac{1}{2} \, \operatorname{vol}(\Omega)^{1-\frac{1}{p}} \, \| \rho - \mu \|_{L^p} . 
    \end{equation}
    Putting these together, we obtain the result.
\end{proof}

\begin{lem}[{\cite[Theorem 5.10]{Santambrogio2015}}] \label{weak_convergence_lem}
    Let $\Omega \subset \R^d$ be compact and $p \in [1,\infty)$. Then $W_p(\mu_n,\mu) \to 0$ if and only if $\mu_n \to \mu$ weakly, that is, $\lb f , \mu_n \rb \to \lb f, \mu \rb$ for all $f \in C(\Omega)$.
\end{lem}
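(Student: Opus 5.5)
The plan is to prove the two implications separately. The statement is a cited theorem (Santambrogio, Theorem 5.10): on a compact $\Omega$, $W_p(\mu_n,\mu)\to 0$ if and only if $\mu_n \rightharpoonup \mu$ in duality with $C(\Omega)$. The two ingredients are the Kantorovich duality for $W_p$ and compactness of $\mathcal{P}(\Omega)$ under weak convergence, via Prokhorov's theorem together with the Riesz representation $C(\Omega)^* = \mathcal{M}(\Omega)$.

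\emph{($W_p \Rightarrow$ weak).} I first treat Lipschitz test functions. For $f$ that is $L$-Lipschitz, take an optimal plan $\gamma_n \in \Gamma(\mu_n,\mu)$ and write
\[
|\lb f,\mu_n\rb - \lb f,\mu\rb| = \Bigl|\int (f(x)-f(y))\,d\gamma_n\Bigr| \le L\,W_1(\mu_n,\mu) \le L\,W_p(\mu_n,\mu),
\]
where the last step is Jensen's inequality. To pass to general $f\in C(\Omega)$, I use that $f$ is uniformly continuous on compact $\Omega$, so it can be approximated uniformly by Lipschitz functions (for instance by Stone–Weierstrass). Since every $\mu_n$ has mass one, an $\varepsilon/3$ argument then gives convergence for $f$.

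\emph{(weak $\Rightarrow W_p$).} Here I use $W_p^p \le \operatorname{diam}(\Omega)^{p-1} W_1$. This bound holds because $|x-y|^p \le \operatorname{diam}^{p-1}|x-y|$ pointwise, and one integrates it against the optimal $W_1$ plan. It therefore suffices to show $W_1(\mu_n,\mu)\to 0$. By Kantorovich–Rubinstein duality,
\[
W_1(\mu_n,\mu)=\sup\{\lb f,\mu_n-\mu\rb : \operatorname{Lip}(f)\le 1\}.
\]
Normalize $f(x_0)=0$ at a fixed point $x_0$. The resulting class is uniformly bounded and equicontinuous, so by Arzelà–Ascoli it is compact in $C(\Omega)$.

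The remaining step is upgrading pointwise convergence on each $f$ to convergence uniform over this compact family. I cover the family by a finite $\varepsilon$-net $f_1,\dots,f_N$ in sup norm. For any $f$ in the family, pick $f_i$ within $\varepsilon$, so that
\[
|\lb f,\mu_n-\mu\rb| \le 2\varepsilon + \max_i |\lb f_i,\mu_n-\mu\rb|,
\]
and the maximum tends to $0$ by weak convergence. Hence $W_1\to0$.

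An alternative route for this direction is a subsequence argument. Take optimal plans $\gamma_n$; by tightness they have a weakly convergent subsequence, whose limit is a plan between $\mu$ and $\mu$. Stability of optimality then gives the limit cost $0$. This needs the stability of optimal plans, so I prefer the duality route.

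I expect the main obstacle to be the weak $\Rightarrow W_p$ direction, specifically justifying Kantorovich–Rubinstein duality and the uniformity over 1-Lipschitz test functions. Compactness of $\Omega$ is essential at both points. It gives Arzelà–Ascoli, and it gives the comparison $W_p^p \lesssim W_1$, which fails on unbounded domains without moment conditions.
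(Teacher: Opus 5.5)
Your argument is correct. The paper gives no proof of its own and only cites \cite[Theorem 5.10]{Santambrogio2015}; the argument there is essentially yours: reduce to $W_1$ via $W_1\le W_p\le \operatorname{diam}(\Omega)^{(p-1)/p}W_1^{1/p}$, bound $\lb f,\mu_n-\mu\rb$ by $W_1$ for Lipschitz $f$ and use density of Lipschitz functions for one direction, and apply duality plus Arzel\`a--Ascoli to 1-Lipschitz potentials vanishing at a fixed point for the other. The only difference is that the source extracts a uniformly convergent subsequence of optimal potentials where you use a finite $\varepsilon$-net, which is the same compactness argument.
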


\begin{lem}[{\cite[Theorem 5.34]{Santambrogio2015}}] \label{h_minus_one_lem}
    Suppose that $\mu$ and $\nu$ are absolutely continuous measures on a convex domain $\Omega$ with both densities bounded below and above by constants $a,b$ with $0 < a < b < \infty$. Then
    \begin{equation}
        b^{1/2} \, \| \mu -  \nu \|_{H^{-1}} ~\leq W_2(\mu,\nu) ~\leq~ a^{-1/2} \, \| \mu - \nu \|_{H^{-1}} .
    \end{equation}
\end{lem}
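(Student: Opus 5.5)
The plan is to work with the dual (homogeneous Sobolev) characterization
\[
\| \mu - \nu \|_{H^{-1}} ~=~ \sup\Big\{ \textstyle\int_\Omega \varphi \, d(\mu-\nu) ~:~ \|\nabla \varphi\|_{L^2(\Omega)} \leq 1 \Big\}.
\]
Both inequalities will then be proved through the Benamou--Brenier dynamic formulation
\[
W_2^2(\mu,\nu) ~=~ \min \int_0^1 \!\! \int_\Omega \rho_s |v_s|^2 \, dx \, ds ,
\]
where the minimum is over curves $\rho_s$ from $\mu$ to $\nu$ solving $\partial_s \rho_s + \nabla\cdot(\rho_s v_s) = 0$ with zero-flux boundary conditions. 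For the upper bound we exhibit a cheap admissible curve. For the lower bound we test the optimal curve against an $H^1$ function.

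\textbf{Upper bound $W_2 \leq a^{-1/2}\|\mu-\nu\|_{H^{-1}}$.}
\begin{enumerate}
\item Solve the Neumann problem $\Delta u = \mu - \nu$ in $\Omega$ with $\hat n \cdot \nabla u = 0$ on $\partial\Omega$. This is solvable because $\mu-\nu$ has zero mean. Set $w := \nabla u$; then $\|w\|_{L^2} = \|\mu-\nu\|_{H^{-1}}$.
\item Take the linear interpolation $\rho_s = (1-s)\mu + s\nu$ with velocity $v_s = w/\rho_s$. Then $\partial_s\rho_s = \nu - \mu = -\nabla\cdot w = -\nabla\cdot(\rho_s v_s)$, and the flux vanishes on the boundary, so the curve is admissible.
\item Since $\rho_s \geq a$ pointwise, the action satisfies $\int_0^1\!\int_\Omega |w|^2/\rho_s \leq a^{-1}\|w\|_{L^2}^2$.
\end{enumerate}
This gives $W_2^2 \leq a^{-1}\|\mu-\nu\|_{H^{-1}}^2$. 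Only the lower density bound $a$ is used here.

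\textbf{Lower bound.}
\begin{enumerate}
\item Let $(\rho_s,v_s)$ be the Wasserstein geodesic (McCann displacement interpolation). It satisfies $\int_\Omega \rho_s|v_s|^2 = W_2^2(\mu,\nu)$ for a.e.\ $s$.
\item Take any $\varphi$ with $\|\nabla\varphi\|_{L^2}\leq 1$. Using the continuity equation and Cauchy--Schwarz twice,
\[
\int \varphi \, d(\nu-\mu) ~=~ \int_0^1\!\!\int \nabla\varphi\cdot v_s\,\rho_s ~\leq~ \int_0^1 \Big(\!\int\rho_s|\nabla\varphi|^2\Big)^{1/2}\Big(\!\int \rho_s|v_s|^2\Big)^{1/2} ds .
\]
\item If $\rho_s\leq b$ for all $s$, the right-hand side is at most $b^{1/2}W_2$. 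Taking the supremum over $\varphi$ yields $\|\mu-\nu\|_{H^{-1}}\leq b^{1/2}W_2(\mu,\nu)$.
\end{enumerate}
This is the inequality $b^{-1/2}\|\mu-\nu\|_{H^{-1}}\leq W_2$, which is the form in \cite[Theorem 5.34]{Santambrogio2015}. I would present the left inequality in this form: the exponent in the displayed statement appears to be a typo, since $b^{1/2}\|\cdot\|_{H^{-1}}\leq W_2$ does not follow from this argument.

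\textbf{Main obstacle: the $L^\infty$ bound along the geodesic.} The key step is showing $\|\rho_s\|_{L^\infty}\leq \max(\|\mu\|_{L^\infty},\|\nu\|_{L^\infty})\leq b$. This is where convexity of $\Omega$ enters: it guarantees that the interpolated maps $T_s = (1-s)\mathrm{id} + sT$ send $\Omega$ into $\Omega$, so $\rho_s$ is a density on $\Omega$. I would argue via Brenier's theorem, writing $T=\nabla\psi$ with $\psi$ convex, and the Monge--Ampère relation
\[
\rho_s(T_s x)\,\det\!\big((1-s)I + sD^2\psi(x)\big) ~=~ \mu(x).
\]
Concavity of $\det^{1/n}$ on positive semidefinite matrices gives
\[
\rho_s(T_s x)^{-1/n} ~\geq~ (1-s)\,\mu(x)^{-1/n} + s\,\nu(Tx)^{-1/n} ~\geq~ b^{-1/n}.
\]
Equivalently, this is displacement convexity of the functional $\int\rho^q$, letting $q\to\infty$. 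The remaining regularity issues (an a.e.\ Hessian via Alexandrov, justifying the weak form of the continuity equation for $H^1$ test functions) are routine and would be handled by approximating $\varphi$ by smooth functions.
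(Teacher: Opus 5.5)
Your argument is correct and is essentially the standard proof of \cite[Theorem 5.34]{Santambrogio2015}, which the paper cites rather than proves. The upper bound comes from Benamou--Brenier applied to the linear interpolation with flux $\nabla u$. The lower bound comes from Cauchy--Schwarz along the displacement interpolation, together with the bound $\rho_s\le b$ obtained from displacement convexity.

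You are also right about the constant: the left inequality must read $b^{-1/2}\|\mu-\nu\|_{H^{-1}}\le W_2(\mu,\nu)$. As written, with $b^{1/2}$, it actually fails once $b$ is taken large with $\mu\neq\nu$. The corrected form is exactly how the paper uses the lemma in its proof of Proposition~\ref{gradient_controller_prop}.
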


\subsection{Analysis of Error-Gradient Controller} \label{error_gradient_appendix}

In order to formally establish the asymptotic convergence result for the controller proposed in Proposition \ref{gradient_controller_prop}, we present a rigorous version of the proposition, which first addresses some basic well-posedness results of the solution of the resulting PDE.

We consider only classical solutions here, and thus suppose that $\rho_0$, $\mu$ are sufficiently regular so that all derivatives are defined and the evolution $\rho_t$ is uniformly positive\footnote{We point out that while the resulting dynamics are always well-posed, solutions may not preserve the positivity of $\rho$ if $\mu$ has large isolated spikes. The assumption \eqref{positivity_assumption} rules out this behavior but is indeed restrictive, and one goal of future work is to relax this assumption.}.

\begin{assumption}
\label{first_assumptions}
    The following results in Appendix \ref{error_gradient_appendix} are proven under the following assumptions:
    \begin{enumerate}
        \item $\Omega \subset \R^n$ is compact and convex with smooth boundary
        \item $v$ satisfies the no-flux condition $v \cdot \hat{n} = 0$ on $\partial \Omega$
        \item $\rho_0$ and $\mu$ are $C^2$ and satisfy the inequality
        \begin{equation} \label{positivity_assumption}
            \min_\Omega (\mu) + \min_\Omega (\rho_0 - \mu) ~\geq~ c_0 ~>~ 0 .
        \end{equation}
    \end{enumerate}
\end{assumption}

\noindent(\textbf{Proposition \ref{gradient_controller_prop}}.)
Assume the hypotheses in Assumption \ref{first_assumptions}, and consider the feedback law
\begin{equation}\label{eq:feedback_heat}
  k(\rho,\mu)(t,x)
  :=
  -\frac{1}{\rho(t,x)}\,\nabla\bigl(\rho(t,x)-\mu(x)\bigr).
\end{equation}
Let $\rho$ be a classical solution of the continuity equation
\begin{equation}\label{eq:closed_loop_continuity}
\begin{cases}
\partial_t \rho = - \nabla \cdot (\rho\, k(\rho,\mu))
& \text{in } (0,T)\times\Omega,\\[0.3em]
(\rho\,k(\rho,\mu))\cdot n = 0
& \text{on } (0,T)\times\partial\Omega,\\[0.3em]
\rho(0,\cdot)=\rho_0
& \text{in }\Omega.
\end{cases}
\end{equation}
Then:
\begin{enumerate}
    \item $\rho(t,x)\ge c_0$ for all $(t,x)\in[0,T]\times\Omega$.
    \item $W_2(\rho_t,\mu) \le \sqrt{\frac{b}{c_0}}\,e^{-\lambda_1 t}W_2(\rho_0,\mu)$, where $b:=\|\rho_0\|_{L^\infty}$  i.e., the controller in \eqref{eq:feedback_heat} renders $\mu$ globally asymptotically stable. 
\end{enumerate}

\begin{proof}

\noindent(1) \textbf{Nonnegativity condition.} Since
\[
\rho\,k(\rho,\mu)=-\nabla(\rho-\mu),
\]
the continuity equation rewrites as
\[
\partial_t\rho-\Delta\rho=-\Delta\mu
\qquad\text{in }(0,T)\times\Omega.
\]
Moreover, the no-flux boundary condition yields
\begin{equation}\label{eq:boundary_condition}
0=(\rho \,\K (\rho,\mu))\cdot n=-\partial_n(\rho-\mu),
\end{equation}
Hence $\rho$ solves the Neumann problem
\[
\begin{cases}
\partial_t\rho-\Delta\rho=-\Delta\mu & \text{in }(0,T)\times\Omega,\\
\partial_n(\rho-\mu)=0 & \text{on }(0,T)\times\partial\Omega,\\
\rho(0,\cdot)=\rho_0 & \text{in }\Omega.
\end{cases}
\]
Now define the error $e:=\rho-\mu$. Since $\mu$ is time-independent, we have
\[
\partial_t e=\partial_t\rho=\Delta(\rho-\mu)=\Delta e
\qquad\text{in }(0,T)\times\Omega,
\]
and, because $\partial_n\rho=\partial_n\mu=0$ on $\partial\Omega$,
\[
\partial_n e=0
\qquad\text{on }(0,T)\times\partial\Omega.
\]
Thus $e$ solves the homogeneous Neumann heat equation
\[
\begin{cases}
\partial_t e-\Delta e=0 & \text{in }(0,T)\times\Omega,\\
\partial_n e=0 & \text{on }(0,T)\times\partial\Omega,\\
e(0,\cdot)=\rho_0-\mu & \text{in }\Omega.
\end{cases}
\]

By the maximum principle,
\[
e(t,x)\ge \min_{\Omega}(\rho_0-\mu)
\ge \min_{\Omega}\rho_0-\max_{\Omega}\mu.
\]
Therefore,
\[
\rho(t,x)=\mu(x)+e(t,x)
\ge \min_\Omega \mu(x)+\min_{\Omega}(\rho_0 -\mu).
\]
Using Assumption \ref{first_assumptions}, we have $\rho(t,x) \ge c_0$.

\noindent(2) \textbf{UGES property.}
Since \(\rho\) and \(\mu\) have the same mass,
\[
\int_\Omega e_t\,dx=0
\qquad \text{for all }t\ge 0.
\]
Let \(\phi_t\) be the mean-zero solution of
\[
-\Delta \phi_t=e_t.
\]
Then
\[
\|e_t\|_{\dot H^{-1}(\Omega)}^2
=\int_\Omega |\nabla \phi_t|^2\,dx
=\int_\Omega e_t\phi_t\,dx.
\]
Differentiating in time and using \(\partial_t e_t=\Delta e_t\), we get
\[
\frac12\frac{d}{dt}\|e_t\|_{\dot H^{-1}(\Omega)}^2
=\int_\Omega (\partial_t e_t)\phi_t\,dx
=\int_\Omega (\Delta e_t)\phi_t\,dx.
\]
Integrating by parts and using \(e_t=-\Delta\phi_t\),
\[
\int_\Omega (\Delta e_t)\phi_t\,dx
=-\int_\Omega \nabla e_t\cdot \nabla \phi_t\,dx
=-\int_\Omega e_t^2\,dx.
\]
Hence
\[
\frac12\frac{d}{dt}\|e_t\|_{\dot H^{-1}(\Omega)}^2
=-\|e_t\|_{L^2(\Omega)}^2.
\]
Since \(e_t\) has zero mean, Poincar\'e's inequality yields
\[
\|e_t\|_{L^2(\Omega)}^2
\ge \lambda_1 \|e_t\|_{\dot H^{-1}(\Omega)}^2,
\]
where \(\lambda_1>0\) denotes the first nonzero eigenvalue of \(-\Delta\) on \(\Omega\)
with Neumann boundary conditions. Therefore
\[
\frac{d}{dt}\|e_t\|_{\dot H^{-1}(\Omega)}^2
\le -2\lambda_1 \|e_t\|_{\dot H^{-1}(\Omega)}^2,
\]
and Gr\"onwall's lemma gives
\[
\|e_t\|_{\dot H^{-1}(\Omega)}
\le e^{-\lambda_1 t}\|e_0\|_{\dot H^{-1}(\Omega)}.
\]
Equivalently,
\[
\|\rho_t-\mu\|_{\dot H^{-1}(\Omega)}
\le e^{-\lambda_1 t}\|\rho_0-\mu\|_{\dot H^{-1}(\Omega)}.
\]

By the maximum principle,
\[
\min_{\Omega}(\rho_0-\mu)\le e(t,x)\le \max_{\Omega}(\rho_0-\mu)
\;\forall (t,x)\in[0,T]\times\Omega.
\]
Therefore,
\[
\mu(x)+\min_{\Omega}(\rho_0-\mu)\le \rho(t,x)\le \mu(x)+\max_{\Omega}(\rho_0-\mu).
\]
 Hence, if
\begin{align*}
a:&= \min_\Omega\mu + \min_{\Omega}(\rho_0-\mu),
\\
b:&= \max_\Omega\mu +  \max_{\Omega}(\rho_0 - \mu),
\end{align*}
then
\[
0<a\le \rho(t,x),\mu(x)\le b
\qquad\text{for all }(t,x)\in[0,T]\times\Omega.
\]
Therefore, by Lemma \ref{h_minus_one_lem},
\[
b^{-1/2}\|\rho-\mu\|_{\dot H^{-1}(\Omega)}
\le W_2(\rho,\mu)
\le a^{-1/2}\|\rho-\mu\|_{\dot H^{-1}(\Omega)}.
\]

Applying the upper bound at time \(t\) and the lower bound at time \(0\), we obtain
\begin{multline}\label{eq:W2stabilization}
W_2(\rho_t,\mu)
\le a^{-1/2}\|\rho_t-\mu\|_{\dot H^{-1}(\Omega)}
\\\le a^{-1/2}e^{-\lambda_1 t}\|\rho_0-\mu\|_{\dot H^{-1}(\Omega)}
\le \sqrt{\frac{b}{a}}\,e^{-\lambda_1 t}W_2(\rho_0,\mu).
\end{multline}
In particular,
\[
W_2(\rho_t,\mu)\to 0
\qquad \text{as }t\to\infty.
\]
\end{proof}

\noindent(\textbf{Proposition \ref{gradient_controller_uges_prop}}.) Under the hypotheses of Assumption \ref{first_assumptions}, the controller \eqref{gradient_controller_eq} renders the setpoint $\mu$ uniformly globally exponentially stable in both the $W_2$ and $L^2$ senses.
\begin{proof}
Defining the error \(e:=\rho-\mu\), and using that \(\mu\) is independent of \(t\), we obtain the heat equation
\begin{equation}\label{eq:error_heat}
  \partial_t e = \Delta e \qquad\text{in }\Omega.
\end{equation}
Let 
\[
V(t) = \frac12\int_\Omega | e(t,x) |^2 dx
\]

For sufficiently regular solutions 
\[
\frac{d}{dt} V(t) =  \int_{\Omega} e \ \partial_t e \ dx   = \int_{\Omega} e \Delta e \ dx  = - \int_{\Omega} | \nabla e|^2 \ dx
\]
where the boundary term vanishes as $\partial_n e = 0$ on the boundary surface. Hence $V$ is non-increasing and 

\[
\int_0^\infty \| \nabla e \|^2_{L^2(\Omega)} \le V(0)
\]
Next we observe that $\int e_0 dx = 0$ and $\int_{\Omega} e_t dx = 0$. Then using Poincare inequality
\begin{equation}\label{eq:poincare_inequality}
    \| e_t\|^2_{L^2} \le \frac{1}{\lambda_1}\| \nabla e_t\|^2_{L^2(\Omega)}
\end{equation}
for some $\lambda_1 >0$, we obtain 
\begin{equation}\label{eq:l2_timederivative}
\frac{d}{dt} \| e_t\|^2_{L^2} = -\| \nabla e_t\|^2_{L^2} \le - \lambda_1\| e_t\|^2_{L^2}. 
\end{equation}
Therefore, using Gr\"onwall we obtain,
\begin{equation}\label{eq:l2gronwall}
\| e_t \|_{L^2} \le \exp(-\lambda_1 t ) \| e_0\|_{L^2}
\end{equation}
and hence, $\rho_t \to \mu$ in $L^2(\Omega)$.
Hence, this establishes the UGES property of the controller w.r.t. $L^2$ norm.

The UGES w.r.t. $W_2$ is already proven in \eqref{eq:W2stabilization}. 
\end{proof}

\noindent \textbf{(Proposition \ref{gradient_controller_effort_prop}.)}
Under the hypotheses of Assumption \ref{first_assumptions}, the feedback law
\[
\K(\rho_t,\mu)=-\frac1{\rho_t}\nabla(\rho_t-\mu)
\]
has finite control effort\footnote{Note that by the uniform bounds on $\rho$, $\K$ also has finite control effort in $L^2(\rho)$, which is a relevant quantity in optimal transport and fluid mechanics.} in \(L^2(\Omega)\), namely
\[
\int_0^\infty \|\K(\rho_t,\mu)\|_{L^2(\Omega)}\,dt<\infty.
\]

\begin{proof}
Define \(e_t := \rho_t - \mu\), and note that \(e_0 \in H^1(\Omega)\) since \(\rho_0, \mu \in C^2(\Omega)\). 
Since \(e_t=\rho_t-\mu\) solves
\[
\partial_t e_t=\Delta e_t
\]
with homogeneous Neumann boundary conditions and zero mean, we have
\[
e_t=e^{t\Delta}e_0.
\]
Moreover, from the positivity estimate on \(\rho_t\),
\[
\rho_t\ge c_0>0 \qquad \text{for all } t\ge 0.
\]

Let \((\phi_k)_{k\ge 0}\) be an orthonormal basis of Neumann eigenfunctions of \(-\Delta\), with eigenvalues
\[
0=\lambda_0<\lambda_1\le \lambda_2\le \cdots.
\]
Since \(e_0\) has zero mean, its \(\phi_0\)-component vanishes, so
\[
e_0=\sum_{k\ge 1} a_k\phi_k,
\;\text{and therefore}\;
e_t=\sum_{k\ge 1} a_k e^{-\lambda_k t}\phi_k.
\]
We use this spectral decomposition of \(e_t\) to obtain
\begin{multline*}
\|\nabla e_t\|_{L^2(\Omega)}^2
=
\sum_{k\ge 1}\lambda_k a_k^2 e^{-2\lambda_k t}
\\ \le
e^{-2\lambda_1 t}\sum_{k\ge 1}\lambda_k a_k^2
=
e^{-2\lambda_1 t}\|\nabla e_0\|_{L^2(\Omega)}^2.
\end{multline*}
Hence
\[
\|\nabla e_t\|_{L^2(\Omega)}
\le e^{-\lambda_1 t}\|\nabla e_0\|_{L^2(\Omega)}
\le e^{-\lambda_1 t}\|e_0\|_{H^1(\Omega)}.
\]
This leads us to the following bound on \(\K(\rho_t,\mu)\):
\begin{align*}
\|\K(\rho_t,\mu)\|_{L^2(\Omega)}
&=
\left\|\frac1{\rho_t}\nabla e_t\right\|_{L^2(\Omega)}
\le \frac1{c_0}\|\nabla e_t\|_{L^2(\Omega)}
\\
&\le \frac1{c_0} e^{-\lambda_1 t}\|e_0\|_{H^1(\Omega)}.
\end{align*}
Integrating in time, we obtain
\[
\int_0^\infty \|\K(\rho_t,\mu)\|_{L^2(\Omega)}\,dt
\le
\frac{1}{c_0\lambda_1}\|e_0\|_{H^1(\Omega)}<\infty.
\]
This proves the claim.
\end{proof}
\noindent \textbf{(Proposition \ref{gradient_controller_symmetries_prop}.)}
Under the hypotheses of Assumption \ref{first_assumptions}, the feedback law
\[
\K(\rho_t,\mu)=-\frac1{\rho_t}\nabla(\rho_t-\mu)
\]
is symmetric, in the sense of Definition \ref{symmetry_def}, under spatial translation symmetry and spatial rotation symmetry.

\begin{proof}
To show symmetry, it suffices to verify equivariance of \(\K\) under the relevant group actions.

\noindent\emph{Spatial translations.}
For any \(z \in \mathbb{R}^d\), let
\begin{align*}
\K(g_z\rho,g_z\mu)(x)
&=
-\frac{\nabla\big((g_z\rho)-(g_z\mu)\big)(x)}{(g_z\rho)(x)} \\
&=
-\frac{\nabla(\rho-\mu)(y)}{\rho(y)} =
\K(\rho,\mu)(y).
\end{align*}
On the other hand, since the pushforward of a vector field is
\[
(g_zv)(x)=v(g_z^{-1}(x))=v(y),
\]
we have
\[
(g_z\K(\rho,\mu))(x)=\K(\rho,\mu)(y).
\]
Therefore,
\[
\K(g_z\rho,g_z\mu)(x)
=
(g_z\K(\rho,\mu))(x),
\]
so $\K$ is symmetric/equivariant with respect to spatial translations.

\noindent \emph{Spatial rotations.}
For any \(R \in SO(d)\), we compute
\begin{align*}
\K(g_{R,z}\rho,g_{R,z}\mu)(x)
&=
-\frac{\nabla\big((g_{R,z}\rho)-(g_{R,z}\mu)\big)(x)}{(g_{R,z}\rho)(x)} \\
&=
-\frac{R\,\nabla(\rho-\mu)(y)}{\rho(y)} =
R\K(\rho,\mu)(y).
\end{align*}
On the other hand, since the pushforward of a vector field is
\[
(g_{R,z}v)(x)=R\,v(g_{R,z}^{-1}(x))=R\,v(y),
\]
we have
\[
(g_{R,z}\K(\rho,\mu))(x)=R\K(\rho,\mu)(y).
\]
Therefore,
\[
\K(g_{R,z}\rho,g_{R,z}\mu)(x)
=
(g_{R,z}\K(\rho,\mu))(x),
\]
so $\K$ is symmetric/equivariant with respect to spatial rotations.
\end{proof}

\subsection{Analysis of Linearized Dynamics} \label{linearized_dynamics_appendix}

The results in Section \ref{main_results_sec} rely heavily on the analysis of the linearized dynamics \eqref{linearized_dynamics_eq}, \eqref{vector_field_b_eq} and are thus proven under the following assumptions.

\begin{assumption} \label{linearization_assumption}
    The following results in Appendices \ref{linearized_dynamics_appendix}, \ref{symmetries_appendix}, and \ref{mixing_appendix} are proven under the following assumptions:
    \begin{enumerate}
        \item $\Omega \subset \R^n$ is compact and convex with smooth boundary.
        \item $v$ satisfies the no-flux condition $v \cdot \hat{n} = 0$ on $\partial \Omega$.
        \item $\rho_t$ and $\mu$ admit densities which are $C^0$ and uniformly bounded above and below by positive constants for all time: $0 < a \leq \rho_t, \mu \leq b < \infty$.
        \item The functions $x \mapsto k(x,\rho(x),\mu(x))$ and $x \mapsto \partial_\rho k(x,\rho(x),\mu(x))$ are $C^0$ for all admissible $\rho$ and $\mu$.
        \item The function $r \mapsto k(x,r,\mu(x))$ is $C^1$ for each $x \in \Omega$.
        \item Both the nonlinear dynamics \eqref{nonlinear_dyn_eqn} and the (formal) linearized dynamics \eqref{linear_dyn_eqn} are well-posed.
        \item The flow $\phi$ of the vector field \eqref{vector_field_b_eq} is globally well-defined, and for each $t\ge 0$, the map $\phi_t:=\phi(t,\cdot):\Omega\to\Omega$ is a bi-Lipschitz homeomorphism. In particular, $D_x\phi_t(x)$ exists for a.e.\ $x\in\Omega$.
        \item $\rho = \mu$ is a fixed point of the nonlinear dynamics \eqref{nonlinear_dyn_eqn}
        \item For every $T>0$, the nonlinear dynamics \eqref{nonlinear_dyn_eqn} satisfy $\| R(e(t)) \|_{H^{-1}( \Omega)} = o(\| e(t) \|_{H^{-1}( \Omega)})$ uniformly, where $R(e)$ denotes the \emph{remainder} of the linearization as defined in \eqref{remainder_eqn}.
    \end{enumerate}
\end{assumption}

Consider the nonlinear dynamics
\begin{equation} \label{nonlinear_dyn_eqn}
    \partial_t \rho_t(x) ~=~ - \nabla \cdot \big( \rho_t(x) \, k(x,\rho_t(x),\mu_t(x))\big) ,
\end{equation}
interpreted in the weak sense. Suppose that $\rho = \mu$ is a fixed point of \eqref{nonlinear_dyn_eqn}:
\begin{equation}
    \nabla \cdot (\mu(x) \, k(x,\mu(x),\mu(x)) ~=~ 0 ,
\end{equation}
and consider the (formal) linearized dynamics around $\mu$:
\begin{equation} \label{linear_dyn_eqn}
    \partial_t e_t(x) ~=~ - \nabla \cdot ( e_t(x) \, b(x) ) ,
\end{equation}
also interpreted in the weak sense, where $e_t := \rho_t - \mu$ and
\begin{equation}
    b(x) ~:=~ k(x,\mu(x),\mu(x)) + \mu(x) \, \partial_\rho k(x,\mu(x),\mu(x)) ,
\end{equation}
where $\partial_\rho k$ denotes the partial derivative of $k$ with respect to its second argument. Define the remainder of the linearization
\begin{equation} \label{remainder_eqn}
    R(e) ~:=~ -\nabla \cdot \big[(\mu + e) \, k(\cdot, \mu+e , \mu) - e b \big] .
\end{equation}
\noindent \textbf{(Lemma \ref{linearized_dynamics_lem}.)}
Suppose Assumptions \ref{linearization_assumption} holds and that (i) both \eqref{nonlinear_dyn_eqn} and \eqref{linear_dyn_eqn} are well-posed. Then \(\mu\) is \(W_2\)-uniformly locally exponentially stable for \eqref{nonlinear_dyn_eqn} if and only if \(0\) is \(H^{-1}\)-uniformly globally exponentially stable for the linearized dynamics \eqref{linear_dyn_eqn}.

\begin{proof}
    We first consider the nonlinear $\rho$ dynamics in the weak sense:
    \begin{equation} \label{nonlin_rho_dynamics}
        \lb \partial_t f , \rho \rb + \lb \nabla f , \rho \, \K(\rho,\mu) \rb ~=~ 0
    \end{equation}
    for all $f \in C^1_c((0,T) \times \Omega)$. By defining $e := \rho - \mu$ and using the fact that $\partial_t \mu = 0$, we may write the equivalent nonlinear $e$ dynamics:
    \begin{equation} \label{nonlin_e_dynamics}
        \lb \partial_t f , e \rb + \lb \nabla f , (\mu+e) \, \K(\mu + e,\mu) \rb ~=~ 0 .
    \end{equation}
    Expanding $\K$ in $e$
    \[ \K(\mu+e,\mu) ~=~ \K(\mu,\mu) + e \partial_\rho \K(\mu,\mu) + r(e) , \]
    where $r(e) = o(e)$. Next substituting the above expression into \ref{nonlin_e_dynamics}, and using the fact that $\mu$ is a fixed point of the dynamics \ref{nonlin_rho_dynamics} (so that $\nabla \cdot (\mu \, \K(\mu , \mu)) = 0$), we obtain
    \begin{multline} \label{remainder_e_dynamics}
        \lb \partial_t f , e \rb + \lb \nabla f , e \, (\K(\mu,\mu) + \mu \partial_\rho \K(\mu,\mu)) \rb \\
        + \lb f , R(e)  \rb~=~ 0 ,
    \end{multline}
    where
    \begin{align}
        R(e) ~:=&~ - \nabla \cdot \left[ e^2 \partial_\rho \K(\mu,\mu) + (\mu+e) r(e) \right] \\
        ~=&~ - \nabla \cdot \left[ (\mu+e) \K(\mu+e,\mu) -\mu \K(\mu,\mu) - eb \right]
    \end{align}
    is the \emph{remainder} of the linearization. Define the (formal) linearized $\tilde{e}$ dynamics by
    \begin{equation}
        \lb \partial_t f , \tilde{e} \rb + \lb \nabla f , \tilde{e} \, (\K(\mu,\mu) + \mu \partial_\rho \K(\mu,\mu)) \rb ~=~ 0 ,
    \end{equation}
    equivalently,
    \begin{equation} \label{linearized_e_dynamics}
        \lb \partial_t f + \nabla f \cdot b , \tilde{e} \rb ~=~ 0 .
    \end{equation}
    Assuming both the nonlinear $e$ dynamics and linearized $\tilde{e}$ dynamics to be well-posed, the solutions of the linearized $\tilde{e}$ dynamics approximate those of the nonlinear $e$ dynamics for small $e$ if and only if $\lb f , R(e) \rb$ is asymptotically negligible compared to the rest of the equation.    
 
If
\begin{equation}
    \|R(e(t))\|_{H^{-1}(\Omega)}
    =
    o\big(\|e(t)\|_{H^{-1}(\Omega)}\big)
\end{equation}
uniformly for \(t\) in compact intervals, then the nonlinear remainder is
asymptotically negligible relative to the linearized dynamics in the
\(H^{-1}\)-topology. More precisely, using the variation-of-constants formula,
\begin{equation}
    e(t)
    =
    T(t)e_0
    +
    \int_0^t T(t-s)R(e(s))\,ds,
\end{equation}
where \(T(t)\) is the \(C_0\)-semigroup generated by the linearized dynamics
\eqref{linearized_e_dynamics}. The above remainder estimate implies that
\begin{equation}
    \sup_{0\le t\le T}
    \frac{
    \left\|
    \int_0^t T(t-s)R(e(s))\,ds
    \right\|_{H^{-1}}
    }{
    \|e_0\|_{H^{-1}}
    }
    \to 0
    \; \text{as } \|e_0\|_{H^{-1}}\to 0 .
\end{equation}
Hence the nonlinear semiflow \(S(t)\), generated by \eqref{nonlinear_dyn_eqn}, is Fréchet differentiable at \(\mu\), uniformly on compact time intervals, and its derivative is precisely the linearized semigroup:
\begin{equation}
    DS(t)[\mu]=T(t).
\end{equation}
Therefore, by \cite[Theorem 2.5]{aljamal2018linearized}, exponential stability of the linearized
semigroup \(T(t)\) implies local exponential stability of the nonlinear
dynamics in \(H^{-1}\). Conversely, if the nonlinear dynamics are locally
exponentially stable in \(H^{-1}\), then differentiating the nonlinear stability
estimate at \(\mu\), using \(DS(t)[\mu]=T(t)\), yields exponential stability of
\(T(t)\). Finally, the local equivalence of the \(W_2\) and \(H^{-1}\) metrics
from Lemma~\ref{h_minus_one_lem} transfers the \(H^{-1}\)-stability statement
to the desired \(W_2\)-stability statement.
\end{proof}

\noindent(\textbf{Proposition \ref{l1_error_prop}}.)
Let \(e_t\) solve \eqref{linearized_dynamics_eq} with \(e_0\not\equiv 0\) and \(\int_\Omega e_t\,dx=0\). Then under Assumption \ref{linearization_assumption}, for every \(T>0\), the $L^1$ norm of the error is conserved: $\| e_t \|_{L^1} =$ constant. Thus, the linearized dynamics \eqref{linearized_dynamics_eq} are neither finite-time nor asymptotically stabilizable to \(e =  0\) in $L^1(\Omega)$.

\begin{proof}
By Assumption~\ref{linearization_assumption}, for each $t\ge 0$ the map $\phi_t$ is bi-Lipschitz. Hence, by Rademacher's theorem, $\phi_t$ is differentiable for a.e.\ $x\in\Omega$. Define
\begin{equation}
\begin{aligned}
    A(t,x)&:=D_x\phi_t(x)\quad\text{for a.e.\ }x\in\Omega,\\
    J(t,x)&:=\det A(t,x).
\end{aligned}
\end{equation}
For a.e.\ $x$, the matrix $A(t,x)$ satisfies the variational equation
\[
\frac{d}{dt}A(t,x)=Db(\phi_t(x))\,A(t,x),\qquad A(0,x)=I.
\]
By Jacobi's formula,
\begin{equation}
\begin{aligned}
    \frac{d}{dt}J(t,x)
    &= \operatorname{tr}(Db(\phi_t(x)))\,J(t,x) \\
    &= (\nabla\cdot b)(\phi_t(x))\,J(t,x),
    \qquad J(0,x)=1,
\end{aligned}
\end{equation}
and therefore
\begin{equation}\label{eq:Jacobian_exp}
J(t,x)=\exp \left(\int_0^t (\nabla\cdot b)(\phi_s(x))\,ds\right)>0 , 
\quad\text{a.e.}~x\in\Omega.
\end{equation}

Along characteristics, the solution satisfies
\[
e(t,\phi_t(x))=\frac{e_0(x)}{J(t,x)}.
\]
Using the change of variables $y=\phi_t(x)$, which is valid since $\phi_t:\Omega\to\Omega$ is a bi-Lipschitz bijection, we obtain
\begin{equation}\label{eq:L1norm_preservation}
\begin{aligned}
\int_\Omega |e(t,y)|\,dy
&=\int_\Omega |e(t,\phi_t(x))|\,J(t,x)\,dx \\
&=\int_\Omega \left|\frac{e_0(x)}{J(t,x)}\right|\,J(t,x)\,dx \\
&=\int_\Omega |e_0(x)|\,dx.
\end{aligned}
\end{equation}
Hence the \(L^1\) norm is preserved along the flow, i.e.,
\[
\|e(t)\|_{L^1(\Omega)}=\|e_0\|_{L^1(\Omega)}
\qquad\text{for all }t\ge 0.
\]
Therefore, for any nonzero initial perturbation \(e_0\), the quantity \(\|e(t)\|_{L^1(\Omega)}\) cannot tend to zero. By continuity of the norm, the equilibrium \(\mu\) cannot be asymptotically stabilized in \(L^1\) for nontrivial initial perturbations.
\end{proof}
\medskip

\noindent(\textbf{Corollary \ref{lp_instability_cor}}.) Under Assumption \ref{linearization_assumption}, the linearized dynamics are not asymptotically stabilizable to $e = 0$ in any \(L^p(\Omega)\), \(1\le p\le\infty\). Thus no controller of the the form \eqref{pointwise_control_eq} can solve Problem \ref{distributed_regulation_prob} with $L^p$ uniform local exponential stability.

\begin{proof}
From \ref{l1_error_prop} we have that $e(t)$ cannot converge to $0$ in $L^1(\Omega)$ as $t\to\infty$. Since $\Omega$ has finite measure, convergence in $L^p(\Omega)$ implies convergence in $L^1(\Omega)$ for every $1\le p\le\infty$ (indeed,
$\|f\|_{L^1}\le |\Omega|^{1-1/p}\|f\|_{L^p}$), and therefore $e(t)\not\to 0$ in $L^p(\Omega)$ for any $1\le p\le\infty$. Since the linearized dynamics cannot be $L^p$-ULES, neither can the nonlinear dynamics.
\end{proof}

\subsection{Controller Symmetries} \label{symmetries_appendix}

In this paper, we consider only symmetries of controllers of order at most 1, i.e., of the form
\begin{equation} \label{k_def}
    \K(\rho,\mu)(x) ~:=~ k(x,\rho(x),\mu(x),\nabla\rho(x),\nabla \mu(x)) .
\end{equation}
For ease of readability, we will henceforth drop the ${(\cdot)}_*$ from the pushforward action $g_*$, but the reader is advised to keep the definitions from Section \ref{desirable_properties_sec} in mind.

We now discuss two examples which appear in the paper.

\begin{exmp}[Spatial Translations] \label{translations_exmp}
    Let $g_z$ denote spatial translation by $z \in \R^n$, that is, $g_z : x \mapsto x+z$. Then $g_z$ acts on scalar functions by $(g_z f)(x) = f(x-z)$ and vector functions by $(g_z v)(x) = v(x-z)$ since the differential $Dg_z$ is the identity. Then a controller $\K$ is symmetric with respect to spatial translations if $g_z (\K(\rho,\mu)) = \K(g_z\rho,g_z \mu)$ for all $z$, $\rho$, $\mu$, or, recognizing $\K(\rho,\mu)$ as a vector field and evaluating at $x$, if $\K(\rho,\mu)(x-z) = \K(g_z \rho,g_z \mu)(x)$. Applying the form \eqref{k_def}, this becomes
    \begin{multline}
        k(x-z,\rho(x-z),\mu(x-z),\nabla\rho(x-z),\nabla \mu(x-z)) \\
        ~=~ k(x,\rho(x-z),\mu(x-z),\nabla\rho(x-z),\nabla \mu(x-z)) ,
    \end{multline}
    where we have used the fact that
    \begin{equation}
        \nabla (g_z \rho)(x) ~=~ \nabla \rho (I-z)(x) ~=~ \nabla \rho(x-z)
    \end{equation}
    and similarly for $\mu$. We can therefore see that the above equation holds for all $z$, $\rho$, $\mu$ if and only if $k$ is independent of $x$. Thus, a controller $\K$ of the form \eqref{k_def} is symmetric/equivariant with respect to spatial translations if and only if is independent of $x$.
\end{exmp}

\begin{exmp}[Spatial Rotations] \label{rotations_exmp}
    Let $g_{R,z}$ denote spatial rotation by the rotation matrix $R \in SO(n)$ about the point $z$, that is, $g_{R,z} : x \mapsto R(x-z) + z$. Then $g_{R,z}$ acts on scalar functions by $(g_{R,z}f)(x) = f(R^{-1}(x-z)+z)$ and vector functions by $(g_{R,z}v)(x) = Rv(R^{-1}(x-z)+z)$. Then applying the definitions and the form \eqref{k_def}, a controller $\K$ is symmetric with respect to spatial rotations if
    \begin{multline}\label{eq:rotation_equivariance}
        Rk(y,\rho(y),\mu(y),\nabla \rho(y), \nabla \mu(y)) \\
        ~=~ k(x,\rho(y),\mu(y),R \nabla \rho(y), R \nabla \mu(y))
    \end{multline}
    for all $R$, $y := R^{-1}(x-z)+z$, $\rho$, $\mu$, and where we have used the fact that
    \begin{multline}
        \nabla (g_{R,z} \rho)(x) ~=~ \nabla (\rho(R^{-1}(I-z)+z))(x) \\
        ~=~ (R^{-1})^T(\nabla \rho)(y) ~=~ R \nabla \rho(y)
    \end{multline}
    since $R$ is orthogonal (and similarly for $\mu$).
\end{exmp}

\noindent{\textbf{(Proposition \ref{rotational_controller_prop}.)}} The only controller of the form \eqref{pointwise_control_eq} which is symmetric under spatial rotations is the trivial controller
    \begin{equation}
        k(x,\rho(x),\mu(x)) ~=~ 0 .
    \end{equation}

\begin{proof}
Using \eqref{eq:rotation_equivariance} we have
\[
k(x,r,m)=R\,k(x,r,m).
\]
Thus $k(x,r,m)$ is fixed by every rotation $R\in SO(d)$ that fixes $x$.  The only vector in $\mathbb R^d$ fixed by all rotations is the zero vector. Therefore
\[
k(x,r,m)=0.
\]
Since $x,r,m$ were arbitrary, it follows that $k\equiv 0$.
\end{proof}

\noindent (\textbf{Corollary \ref{rotational_controller_cor}.}) No controller of the form \eqref{pointwise_control_eq} which is symmetric under spatial rotations solves Problem \ref{distributed_regulation_prob}.

\begin{proof}
    Since the only controller of the form \eqref{pointwise_control_eq} which is symmetric under spatial rotations is the trivial controller $\K(\rho,\mu)=0$, each evolution under this controller is constant in time: $\rho_t = \rho_0$. Thus the dynamics cannot be attracting.
\end{proof}

\noindent (\textbf{Proposition \ref{translational_controller_prop}.}) Under Assumption \ref{linearization_assumption}, no controller of the form \eqref{pointwise_control_eq} which is symmetric under spatial translations solves Problem \ref{distributed_regulation_prob} with $W_2$ uniform local exponential stability.

\begin{proof}
    If $\K$ is a controller of the form \eqref{pointwise_control_eq} which is symmetric under spatial translations, then by Example \ref{translations_exmp}, $k$ is independent of $x$. We may then construct a counterexample to stability by considering a setpoint $\mu$ which is constant. Under Assumption \ref{linearization_assumption}, Lemma \ref{linearized_dynamics_lem} applies, and the linearized dynamics take the form \eqref{linear_dyn_eqn} for a vector field $b$ which is constant, since $\mu$ is constant, and $k$ is independent of $x$. By the no-flux boundary condition on $\Omega$, the vector field $b$ must therefore be 0. Thus $e$ must be constant-in-time, the linearization cannot be $H^{-1}$-UGES, and the nonlinear dynamics cannot be $W_2$-ULES.
\end{proof}

\subsection{Mixing Dynamics} \label{mixing_appendix}

\noindent{\textbf{(Theorem \ref{mixing_thm}.)}} Under Assumption \ref{linearization_assumption}, the following are equivalent:
\begin{enumerate}
    \item Zero is a weakly attractive fixed point of the linearized dynamics
    \begin{equation} \label{lin_dyn_mixing}
        \partial_t e_t ~=~ -\nabla \cdot (e_t b)
    \end{equation}
    starting from initial zero-mean signed measures $e_0$ absolutely continuous w/r/t Lebesgue.
    \item There exists a unique (weakly) limiting invariant measure $\pi$ among all evolutions
    \begin{equation} \label{nu_eqn}
        \partial_t \nu_t ~=~ - \nabla \cdot(\nu_t b)
    \end{equation}
     starting from initial probability measures $\nu_0$ absolutely continuous w/r/t Lebesgue.
\end{enumerate}
Furthermore,
\begin{enumerate}
    \renewcommand{\theenumi}{\roman{enumi}}%
    \renewcommand{\labelenumi}{\theenumi)}%
    \item If (1) and (2) hold and $\pi$ is absolutely continuous w/r/t Lebesgue, then $b$ must be mixing w/r/t $\pi$.
    \item If $b$ is mixing w/r/t some invariant measure $\pi'$ which dominates Lebesgue, then (1) and (2) hold and $\pi = \pi'$.
\end{enumerate}

\begin{proof}
    (1) $\Rightarrow$ (2). Starting from any initial absolutely continuous probability measure $\nu_0$, consider its evolution $\nu_t$ under the dynamics \ref{nu_eqn}. Since $\Omega$ is bounded, Prokhorov's theorem gives the existence of a weak limit point of $\nu_t$ -- call it $\pi$. The attractivity of zero in the linearized dynamics \eqref{lin_dyn_mixing} then allows us to conclude that $\pi$ is the unique limit for all absolutely continuous initial probability measures: consider the evolution $\nu_t'$ starting from any other initial absolutely continuous probability measure $\nu_0'$. Since $e_0 := \nu_0 - \nu_0'$ is zero mean and absolutely continuous, by weak global attractivity, we have
    \begin{equation} \label{mixing_proof_eqn_1}
    \begin{split}
        0 ~&=~ \lim \lb f , e_t \rb ~=~ \lim \lb f , \nu_t - \nu_t' \rb \\
        ~&=~ \lim \lb f , \nu_t \rb - \lim \lb f , \nu_t' \rb \\
        &\Leftrightarrow \quad \lim \lb f , \nu_t \rb ~=~ \lim \lb f , \nu_t' \rb ~~ \forall f \\
        &\Leftrightarrow \quad \lim \nu_t ~=~ \lim \nu_t' ~=~ \pi .
    \end{split}
    \end{equation}
    By Assumption \ref{linearization_assumption}, the flow map $\phi$ of the vector field $b$ is globally well-defined and continuous, so $\nu_t$ admits the representation $(\phi_t)_\# \nu_0$, the semigroup $(\phi_t)_\#$ is continuous, and $\pi$ must be invariant.

    (2) $\Rightarrow$ (1). Since $e_0$ is absolutely continuous and zero-mean, it may be expressed as a scaled difference of absolutely continuous probability measures: $e_0 = a(\nu_0 - \nu_0')$ for some $a \geq 0$, $\nu_0$, $\nu_0'$. Since $\pi$ is the unique limiting invariant measure, we have $\lim \nu_t = \lim \nu_t' = \pi$. Then the argument \eqref{mixing_proof_eqn_1} runs in reverse (modulo the constant $a$), and $e_t \to 0$ weakly. Since $e_0$ was arbitrary, (1) follows.

    (i). Supposing that (1) and (2) hold and $\pi \ll dx$, $\pi$ has a density w/r/t $dx$ -- call it $p$. Then breaking the function $g$ into its zero-mean and constant components w/r/t $\pi$, we have $g = \bar{g} + \tilde{g}$, where $\bar{g} := \lb 1 , g \rb_\pi$ and $\tilde{g}:= g - \bar{g}$. Note that $\lb 1 , \tilde{g} \rb_\pi = \lb 1 , \tilde{g} p \rb = 0$. Since $\pi$ is invariant, we have $(\phi_t)_\# (\tilde{g}p) = \tilde{g}_t p$, where $g_t := g \circ \phi_t^{-1}$ and similarly for $\bar{g}_t$ and $\tilde{g}_t$. Since $\bar{g}_t$ is constant, it is invariant under the flow, and since $\tilde{g}_t p = (\phi_t)_\# (\tilde{g} p)$ is zero-mean, we have $\lb f , \tilde{g}_t \rb_\pi = \lb f , \tilde{g}_t p \rb \to 0$ by weak attractivity. Putting these facts together, we have
    \begin{equation}
        \begin{split}
            \lb f, g_t \rb_\pi ~=~ \lb f , \bar{g}_t + \tilde{g}_t \rb_\pi ~=~ \lb f , \bar{g}_t \rb_\pi + \lb f , \tilde{g}_t \rb_\pi \\
            ~\to~ \lb f , \bar{g}_t \rb_\pi + 0 ~=~ \lb f , \bar{g} \rb_\pi ~=~ \lb 1 , f \rb_\pi \bar{g} \\
            ~=~ \lb 1 , f \rb_\pi \lb 1 , g \rb_\pi ,
        \end{split}
    \end{equation}
    and thus $b$ is mixing w/r/t $\pi$ by definition.

    (ii). Suppose that $b$ is mixing w/r/t some invariant measure $\pi'$ which dominates Lebesgue. Then consider any zero-mean $e_0$ which is absolutely continuous w/r/t Lebesgue. Since $e_0 \ll dx \ll \pi'$ implies $e_0 \ll \pi'$, $e_0$ admits a density w/r/t $\pi'$ -- call it $\eta_0$. Note that $\lb 1 , \eta_0 \rb_{\pi'} = \lb 1 , e_0 \rb = 0$. Then since $b$ is mixing w/r/t $\pi'$, we have
    \begin{equation}
        \begin{split}
            \lb f , e_t \rb ~=~ \lb f , \eta_t \rb_{\pi'} ~\to~ \lb 1 , f \rb_{\pi'} \lb 1 , \eta_0 \rb_{\pi'} \\
            ~=~ \lb 1 , f \rb_{\pi'} \cdot 0 ~=~ 0 ,
        \end{split}
    \end{equation}
    and since $f$ and $e_0$ were arbitrary, zero is a weakly attractive fixed point for \eqref{lin_dyn_mixing}. Now, by the equivalence of (1) and (2), there exists some invariant measure $\pi$ which is the unique limiting measure starting from all initial absolutely continuous probability measures, and we claim that $\pi = \pi'$. Consider any absolutely continuous initial probability measure $\nu_0$ and its evolution $\nu_t$. Since $\nu \ll dx \ll \pi'$, $\nu \ll \pi'$, and $\nu$ admits a density w/r/t $\pi'$, call it $n$. Note that $\lb 1 , \nu_0 \rb = \lb 1 , n_0 \rb_{\pi'} = 1$. Since $b$ is mixing w/r/t $\pi'$, we have
    \begin{equation}
        \begin{split}
            \lb f , \nu_t \rb ~=~ \lb f , n_t \rb_{\pi'} ~\to~ \lb 1 , f \rb_{\pi'} \lb 1 , n_0 \rb_{\pi'} \\
            ~=~ \lb f , 1 \rb_{\pi'} \cdot 1 ~=~ \lb f , \pi' \rb ,
        \end{split}
    \end{equation}
    thus $\lb f , \nu_t \rb \to \lb f , \pi' \rb$ for every $f$, so $\nu_t \to \pi'$. But $\nu_t \to \pi$ by definition, and so $\pi = \pi'$.
\end{proof}


\bibliographystyle{IEEEtran}
\bibliography{library}

\vspace{-5mm}

\begin{IEEEbiography}[{\includegraphics[width=1in,height=1.25in,clip,keepaspectratio]{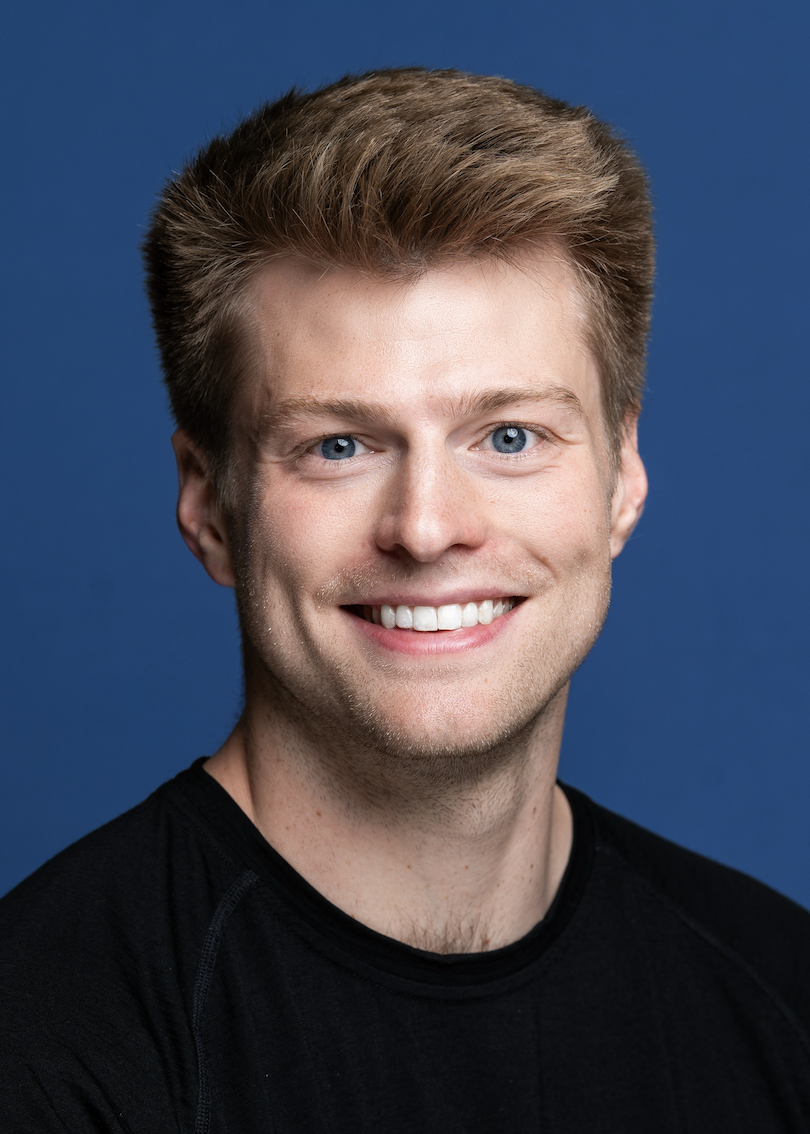}}]{Max Emerick} (GS'22)
	received the B.S. degree in mechanical engineering from California Polytechnic State University San Luis Obispo in 2020 and the M.S. degree in mechanical engineering from the University of California Santa Barbara in 2022. He is currently pursuing the Ph.D. degree in mechanical engineering at the University of California Santa Barbara.
	His research interests include optimal control and distributed parameter systems.
	Mr. Emerick is a recipient of the IFAC Young Author Award, the Conference on Decision and Control Outstanding Student Paper Award, and the National Defense Science and Engineering Graduate (NDSEG) Fellowship.
\end{IEEEbiography}

\vspace{-5mm}

\begin{IEEEbiography}[{\includegraphics[width=1in,height=1.25in,clip,keepaspectratio]{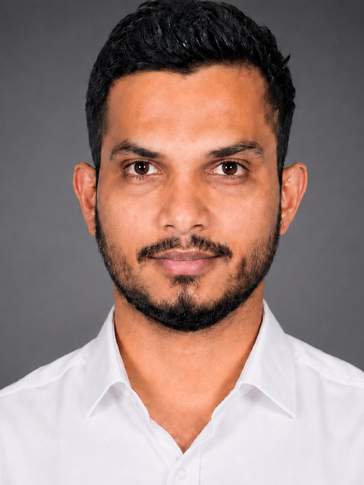}}]{Saroj Prasad Chhatoi} 
received his M.Sc. degree in Physics from the Institute of Mathematical Sciences, Chennai, India, in 2019, and his Ph.D. degree in Systems and Control from the University of Toulouse, France, in 2026. Between 2019 and 2022, he worked as a research engineer in the robotics industry and later as a research fellow at Centro E. Piaggio, University of Pisa, Italy, where he worked on optimal control for legged robots. He is currently a postdoctoral researcher in the Applied Mathematics group at TU Berlin, Germany. His research interests include stochastic and optimal control, polynomial optimization, optimal transport, and generative modeling.
\end{IEEEbiography}

\vspace{-5mm}

\begin{IEEEbiography}[{\includegraphics[width=1in,height=1.25in,clip,keepaspectratio]{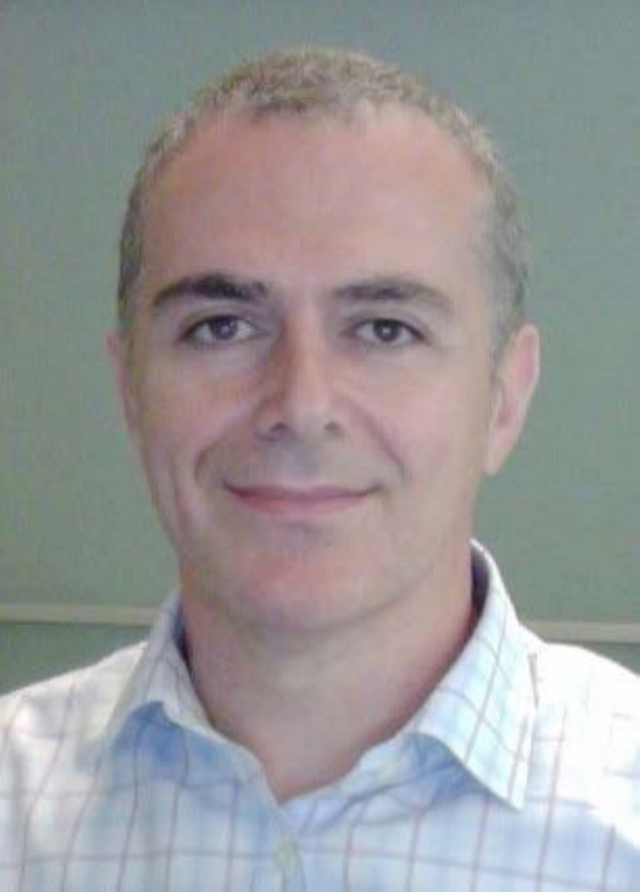}}]{Bassam Bamieh}
(Fellow, IEEE) received the B.Sc. degree in electrical engineering and physics from Valparaiso University, Valparaiso, IN, USA, in 1983, and the M.Sc. and Ph.D. degrees in electrical and computer engineering from Rice University, Houston, TX, USA, in 1986 and 1992, respectively.
From 1991 to 1998, he was an Assistant Professor with the Department of Electrical and Computer Engineering, and the Coordinated Science Laboratory, University of Illinois
at Urbana-Champaign, after which he joined the University of California at Santa Barbara (UCSB), where he is currently a Professor of Mechanical Engineering. His research interests include robust and optimal control, distributed and networked control and dynamical systems, and related problem in fluid and statistical mechanics and thermoacoustics.
Dr. Bamieh is a past recipient of the IEEE Control Systems Society G. S. Axelby Outstanding Paper Award (twice), the AACC Hugo Schuck Best Paper Award, and the National Science Foundation CAREER Award. He was a Distinguished Lecturer of the IEEE Control Systems Society (twice), and is a Fellow of the International Federation of Automatic Control (IFAC).
\end{IEEEbiography}

\end{document}